\theoremstyle{plain}
\newtheorem{theorem}{Theorem}
\newtheorem{lemma}{Lemma}
\newtheorem{definition}{Definition}
\newtheorem{corollary}[lemma]{Corollary}
\newtheorem{remark}{Remark}
\newtheorem{Fact}{Fact}
\def\reals{{\mathbb{R}}}
\def\integers{{\mathbb{Z}}}
\def\EXP{{\mathbb{E}}}
\def\Prob{{\mathbb{P}}}
\begin{document}

\title[Energy transfer and joint diffusion]
{Energy transfer and joint diffusion}

\author{Zsolt Pajor-Gyulai, Domokos Sz\'asz}
\

\thanks{D. Sz. is grateful to
Hungarian National Foundation for Scientific Research grants No. T 046187, K 71693,
NK 63066 and TS 049835}

\address{Zsolt Pajor-Gyulai: Budapest University of Technology\\   Institute of Physics
\\  Domokos Sz\'asz: Budapest University of Technology\\   Mathematical Institute
 \\ Budapest, Egry J. u. 1 Hungary H-1111\\
}
\email{pgyzs@math.bme.hu, szasz@math.bme.hu}
\maketitle

\vskip1cm
\begin{abstract}
A paradigm model is suggested for describing the diffusive limit of trajectories of two Lorentz disks moving in a finite horizon periodic configuration of smooth, strictly convex scatterers and interacting with each other via elastic collisions. For this model the diffusive limit of the two trajectories is a mixture of joint Gaussian laws (analogous behavior is expected for the mechanical model of two Lorentz disks).
\vskip5mm
Mathematics Subject Classification: 37D50, 37A60, 60F99.
\end{abstract}

\tableofcontents
{\bf Note:} MRP stands for Markovian Renewal Process, STRP for Scaled Type Renewal Process and STMRP for Scaled Type Markov Renewal Process (cf. section 2).\\

\today
\section{Introduction}
Beside the dynamics itself, the joint motion of two particles interacting with each other and with a dynamical environment also depends  on the spatial dimension. The first model where this question was addressed (cf. \cite{Sz80}) was a one-dimensional mechanical one. There, the asymptotically diffusive motions of the two particles were either glued together or were independent depending on the initial distance of the particles. The model was actually that of Harris and Spitzer, (see \cite{S69}) (equilibrium dynamics of elastically colliding point particles) generalized by Major and Sz\'asz, \cite{MSz80} (non-equilibrium dynamics). In a related one-dimensional - random - collision system (cf. \cite{KLPS83}) the joint motions were dependent. On the other hand,
Kipnis and Varadhan (\cite{KV86}) have shown that the diffusive limits of two particles in a symmetric exclusion process on $\mathbb Z^d$ with $d \ge 1$ are independent Brownian motions except the one-dimensional nearest neighbor case when the motions are subdiffusive.

It is worth noting that the joint motion of two particles got also studied in the physics literature, e. g. the mutual dynamics of pairs of atoms in a dense Lennard-Jones liquid in \cite{PBV84}.

Returning from stochastic dynamics to a deterministic one, let us consider the planar, finite-horizon Lorentz process with a periodic configuration of scatterers. It is known that its limit in the diffusive scaling is a Brownian motion (cf. \cite{BS81} and \cite{BCS91}). Of course, two point like Lorentz particles do not interact, but if we take two small disks then the case is quite different.

A simple fact: The motion of one small disk is itself isomorphic to a Lorentz process, so its diffusive limit is again  the Wiener process. However, if one considers two small Lorentz disks, then the na\"ive heuristics would suggest that, since the two particles collide very rarely (i. e. $O(\log n)$ times during the first $n$ collisions), the situation is similar to the locally perturbed Lorentz process where the diffusive limit is the same Brownian motion as it was for the unperturbed Lorentz process (cf. \cite{DSzV09}). This analogy is, however, misleading and the aim of the present work is exactly to clarify the situation. The difference with the preceding models is {\it the interaction: elastic collision of the disks also changes the energies of the two particles}. Moreover, in dimension two, by borrowing heuristics from random walk theory (cf. \cite{S76}) and estimates from \cite{DSzV08}, one can convince himself/herself that the time intervals between consecutive collisions have a slowly varying tail. Consequently, for large $t$, the last collision of the disks preceding $t$ befell at time $o(t)$ with a probability close to one. Thus the energies of the disks at time $t$ determining the limiting variance are the random energies obtained at the aforementioned last collision before $t$, ergo the diffusive limit of each disk is a Brownian motion with a random covariance (and their joint limit can already be calculated based upon the previous line of ideas).

The goal of the present work is to make the above heuristic argument precise on the level of a stochastic model mimicking the deterministic model of two Lorentz disks.

Our model is, roughly speaking, a colliding system of two random walks with internal states where the speeds of the particles are represented by exponential clocks and are included in the set of internal states. The model, our main conditions and our main result are described in subsections 2.1-2.2. Subsection 2.3 contains the main, often new, probablilistic concepts and results and a sketch of our proof. Section 3 is devoted to the verification of our local limit theorem for general random walks with internal states and some corollaries, whereas section 4 the the proof of our main Theorem 1. Section 5 contains some remarks. Finally, the proofs for our results for Markovian renewal processes and scaled type Markovian renewal processes is provided in the appendices.

\section{The Model. Main result. Methods}

\subsection{The Model}

The dynamics of two Lorentz disks will be modeled by two continuous random walks with continuous internal states whose steps are independent whenever the walkers are at different lattice sites. If they are on the same site, then their interaction is given by a collision operator (see below).

\subsubsection{Continuous Time Random Walks with Continuous Internal States} \hfill

Discrete time random walks with a finite number of internal states were introduced by Sinai, \cite{S81}  where the internal states were meant to represent elements of a Markov partition. The theory was elaborated in a series of works \cite{KSz83,KSz84,KSSz86}. In our case the internal states will also represent particle velocity therefore we have to consider random walks with internal states where the internal states belong to a more general state space. Moreover, for being able to include speed we take continuous time. In \cite{KSz83}, a local limit theorem was established for  random walks on $\integers^d$ with a finite number of internal states and we will also use much of the techniques presented there.

\begin{definition} (Sinai, 1980)\label{def:Sinai}
Let $H, |H| < \infty$ be the set of states. On the set ${\mathbb Z^d} \times {H}$ the Markov chain $\xi_n = ({\eta_n}, {\varepsilon_n})$ is a Random Walk with Internal States (RWwIS) if for $\forall \ x_n, x_{n+1}\  \in \mathbb Z^d, \ u_n \in H, A \subset H$
\[
P ( \xi_{n+1} = (x_{n+1}, u_{n+1}), u_{n+1} \in A  | \xi_n = (x_n, u_n) )= p_{{x_{n+1}-x_n}} {(u_n, A)}
\]
Of course, $\{{\varepsilon_n}; n \ge 0\}_n$ is also a Markov chain due to the spatial translation invariance.
\end{definition}

Our paradigm for the mechanical model will be introduced in two steps. First, the individual motion of each of the two particles will be a continuous time Markovian random walk with internal states (we will abbreviate it by RWwIS again; it will always be obvious whether we are talking about the discrete or the continuous time case) with some general state space $\tilde H$ and a constant exponential-jump rate $\lambda >0 $. (So far we do not specify $\tilde H$). We just note, however, that later $\lambda $ will be included among the internal states of the full two particle system to permit its change at collisions of two particles.

\begin{definition}\label{speed_param} Assume we are given a rate $\lambda >0$ and a family
\begin{equation}\label{subst_kernels}
\{P_x(v, .)| x \in \integers^d\setminus \{0\}\}
\end{equation}
of substochastic kernels over $\tilde H$ such that ${Q} = \sum_{x \in \mathbb Z^d\setminus \{0\}} P_x$ is a stochastic kernel over $\tilde H$.
A continuous time pure jump Markov process $\xi_t=(\eta_t, \varepsilon_t)$ with state space $\mathbb Z^d \times\tilde{H}$ - is called a (generalized) Random Walk with Internal States (RWwIS) if
\[
\Prob (\xi_{t+dt} = \xi_t) = 1 -\lambda dt + o(dt)
\]
and for every $(x_t, u) \in \mathbb Z^d \times \tilde H$, $\forall A \subset \tilde{H}$ and $\ x_{t+dt}-x_t\neq 0$
\[
\Prob (\xi_{t+dt} = (x_{t+dt}, v'), v' \in A| \xi_t = (x_t, v)) = \lambda P_{x_{t+dt}-x_t}(v, A) dt +o(dt)
\]
\end{definition}

In other words, the kernel for a jump to $x \in \mathbb Z^d\setminus \{0\}$ is described  by
\[
{P_x f (v)} = \int_{\tilde{H}} f(v') P_x(v, dv'), \qquad f \in L_\infty(\tilde{H})
\]
and the  transition operator for the discrete time Markov chain $\{\varepsilon_n\}_{n \ge 0}$ of subsequent internal states is $Q: L_\infty(\tilde{H}) \to L_\infty(\tilde{H})$.

From now on we will mainly restrict our discussion to the planar case (though we will briefly mention other cases, too).

As said, our RWwIS is to mimic Lorentz disk process in $\reals^2$. Since in the two particle process the energy, i.e. the rate of the particle will also change, it is appropriate to include this rate among the internal states. Concretely, we will have $\tilde{H} = S \times \mathcal{I}$ for the set of internal states which now also includes the rate $\lambda$. Here $S=\mathbb R / \mathbb Z$ stands for the direction $u=\frac{v}{|v|}$ of the  velocity of a particle and $\mathcal{I}$  for its speed $\lambda=|v|$. (Here $\lambda \in \mathcal{I} =  [a, b], \ \  0\leq a < b < \infty$; $\lambda$ will be, of course, conserved in the absence of interaction).

Without interaction, the generator for a single random walker can be decomposed
\[
Q=Q_S\otimes id_{\mathcal{I}}
\]
Also $P_x((u,\lambda),.)=P_x^S(u,.)\otimes \delta_{\lambda}$. Thus $Q_S=\sum_{x\in\mathbb{Z}^d\setminus \{0\}}P_x$ is indeed the stochastic kernel on $S$.

\subsubsection{Interaction: the collision operator}\label{sect:Interaction}\hfill

Next we define the collision interaction. {Let $\xi^i_t = (\eta^i_t, \varepsilon^i_t), \ i=1, 2$ be two RWwIS.

Whenever {$\eta^1_t \neq \eta^2_t $}, the joint generator of the two Markov processes is the product of the two individual generators modeling two independent Lorentz processes. On the other hand, when {$\eta^1_t = \eta^2_t\ \ (= x)$}, then

\begin{align*}
\Prob {\bf(}  \xi^1_{t+} = ( x+z^1, v^1_+), \xi^2_{t+} = &(x + z^2, v^2_+); v^1_+ \in A^1, v^2_+ \in A^2\\
&| \xi^1_{t-} = (x, v_-^1), \xi^2_{t-} = (x, v_{t-}^2){\bf)} \\
&=C_{z^1, z^2}{\bf(} v_-^1, v_-^2, A^1, A^2{\bf)}
\end{align*}
is the {collision kernel}. We assume that $C$ satisfies conservation of energy: $(v_-^1)^2+(v_-^2)^2 = (v^1_+)^2+(v^2_+)^2$ (momentum is not conserved since the collision kernel contains averaging over normal of impact, see below). Thus
\[
C_{z^1,z^2}(v_-^1,v^2_-,.,.)=C_{z^1,z^2}(\lambda^1_-,u_-^1,u_-^2,.,.,.)
\]
where $\lambda_-^1$ is the precollisional speed parameter of the first random walker (that of the other one is determined by energy conservation). For convenience, we will always use the speed of the first walker to describe the energy partition between the two particles.

We can and do assume that $(v^1)^2+(v^2)^2 = 1$. Therefore  the state space of the two particle process is isomorphic to $\left(\mathbb{Z}^2\times S\right)^2\times\tilde{\mathcal{I}}$ where $\tilde{\mathcal{I}}=[0,1]$ (in what follows we always assume $2E=1$). It is worth noting that the concrete form of the collision kernel for the mesoscopic version of two disk model is calculated in Appendix A of \cite{GG08}.\vspace{1mm}
\\
 {{\it Warning}: the pair $(\xi_t^1, \xi^2_t)$ is not a RWwIS on $\mathbb Z^d \times \mathbb Z^d$ anymore since translation invariance is hurt on the diagonal.}
\\

\subsubsection{Molecular chaos}
\begin{enumerate}
\item If we recall that our model is to mimic the two disk process, we note that the deterministic law driving the collision does not only involve $v_-^1$ and $v_-^2$, but also an angle describing the positions of the two disks relative to each other. As frequently in the literature, we assume that the distribution of this angle is uniform, and averaging over it gives our stochastic collision operator defined above.

\item As we will see later (cf. Theorem \ref{Haeusler-Mason}), short inter-collision times are extremely rare asymptotically, so the joint law of the directions of incoming velocities will approach an equilibrium distribution. The particular form of this law is not important, the point is that by averaging over $u_-^1,u_-^2$, we will use the mesoscopic collision kernel
\[
\tilde{C}_{z^1,z^2}(\lambda_-^1\to\lambda_+^1,u_+^1,u_+^2)
\]
This shorthand notation means that we will only use $\lambda_-^1$ from the precollisional data to compute the postcollisional velocities.
\end{enumerate}

\subsubsection{Summary of the model}

Finally, we have as our object of investigation
\[
J_t=(\xi_t^1,\xi_t^2,\Lambda_t)=(\eta_t^1,\varepsilon_t^1,\eta_t^2,\varepsilon_t^2,\Lambda_t)
\]

Introduce the notation $\tilde{\lambda}=\lambda+\sqrt{1-\lambda^2}$, i.e. the sum of the two rates and let $\lambda_i=\lambda$ if $i=1$ and $\lambda_i=\sqrt{1-\lambda^2}$ if $i=2$. Then the dynamics can be summarized as the following. First,
\[
\Prob(J_{t+dt}=J_t|\Lambda_t=\lambda)=1-\tilde{\lambda}dt+o(dt)
\]
For $x_{t+dt}^i\neq x_t^i$ and $x_t^1\neq x_t^2$,
\begin{align*}
\Prob(\xi_{t+dt}^i\in\{x_{t+dt}^i\}\times A,\xi_{t+dt}^{3-i}=\xi_t^{3-i},\Lambda_{t+dt}=&\Lambda_t|\xi_t^i=(x_t^i,u_t^i),\Lambda_t=\lambda)=\\
&=\lambda_i P^{S}_{x_{t+dt}^i-x_t^i}(u_t^i,A)dt+o(dt)
\end{align*}
However, when $\eta_t^1=\eta_t^2=x$,
\begin{align*}
\Prob(\xi^i_{t+dt}=\{x+z^i\}\times A^i,\Lambda_{t+dt}\in A_3&|\xi^i_t=(x,u_t^i),\Lambda_t=\lambda)=\\
&=\tilde{\lambda}\tilde{C}_{z^1,z^2}(\lambda,A_3,A^1,A^2)dt+o(dt)
\end{align*}
Everything else is just $o(dt)$. We launch the process from the initial state $J_0=(0,u_0^1,0,u_0^2,\lambda_0)$.

Denote the time of the first jump after $t$ with
\[
t_{\rm fj}(t)=\inf\{s>t: J_s\neq J_t\}
\]

Our main result will concern the four-tuple
\[
\tilde{J}_t=(\xi_t^1,\xi_t^2)=(\eta_t^1,\epsilon_t^1,\eta_t^2,\epsilon_t^2)
\]
on the set $(\mathbb{Z}^2\times S)^2$. We will prove weak convergence on the space $(\mathbb{R}^2\times S)^2$ endowed with the metric
\[
d((x^1,u^1,x^2,u^2),(y^1,w^1,y^2,w^2))=\sum_{i=1}^2(|x^i-y^i|+d_S(u^i,w^i))
\]
where $d_S(u^i,w^i)$ is the length of the shorter arc joining $u^i$ and $w^i$ on $S$.

For convenience, we define the arithmetic operations on $(\mathbb{R}^2\times S)^2$ by
\[
(x^1,u^1,x^2,u^2)+(y^1,w^1,y^2,w^2)=(x^1+y^1,u^1,x^2+y^2,u^2)
\]
\[
c(x^1,u^1,x^2,u^2)=(cx^1,u^1,cx^2,u^2)\qquad c\in\mathbb{R}
\]

\subsection{Main result}

\subsubsection{Notations and conditions}\hfill\label{Notations_and_Conditions}

We start with arbitrary dimension $d\ge 1$. For a single RWwIS, introduce the operator valued expected displacement and further the analog for the covariance

\begin{equation}\label{eq:moments}
M_l=\sum_{x\in\mathbb{Z}^d}x_lP_x\qquad\Sigma_{l,m}=\sum_{x\in\mathbb{Z}^d}x_lx_mP_x\qquad 1 \le l, m \le d
\end{equation}
where $x_l=(x,e_l)$ and from now on $P_x\equiv P_x^S$. It is easy to see that e.g.
\[
(M_l\mathbb{1})(u)=\mathbf{E}\left((\eta_{t_{\rm fj}(t)})_l-(\eta_t)_l|\varepsilon_t=u\right)
\]
where $\mathbb{1}\in L_{\infty}([a,b])$ is the constant 1 function on $[a,b]$ and $t_{\rm fj}(t)$ is the one walk variant of the above $t_{\rm fj}$ (it will always be clear from the context which is to use). Higher conditional moments can be defined analogously. Due to the bounded range condition below, all these moments are finite.

\noindent{\bf Conditions on the RWwIS}
{\it \begin{enumerate}

\item (Spectral gap) For the operator $Q_S: L_\infty(S)\to L_\infty(S)$ $1$ is a single eigenvalue and the remaining part of its spectrum lies in a circle $|z| \le \delta < 1$. Consequently, for the operator $Q_S^*$ acting over $\mathcal M(S)$, the space of finite measures over $S$, $1$ is also a single eigenvalue whose eigenfunction is the unique stationary probability measure $\rho$, i. e. $Q_S^*\rho=\rho$ (this can also be written $\rho Q_S= \rho$).
\item  (No drift) For $\forall 1 \le l \le d$
\[
(\rho,M_l\mathbb{1})=0;
\]
\item (Bounded range) $P_x = 0$ if $|x| > 1$ (for simplicity);

\item (Nonsingularity of the asymptotic covariance matrix) Along the lines of the LLT of \cite{KSz83}, for the diffusive limit of the displacements of the RWwIS, the asymptotic covariances of the coordinate vectors are equal to
\begin{eqnarray*}
\sigma_{lm}=(\rho,\Sigma_{lm}\mathbb{1})-(\rho,M_l(Q_S-I)^{-1}M_m\mathbb{1})-\\
-(\rho,M_m(Q_S-I)^{-1}\mathbb{1})
\end{eqnarray*}
$(1 \le l, m \le d)$. It is assumed that the matrix $\left(\sigma_{lm}\right)_{1 \le l, m \le d}$ is positive definite.
\end{enumerate}}

\noindent{\bf Conditions on the collision kernel}
{\it \begin{enumerate}
\item (Ergodicity) For $A\subseteq\mathcal{\tilde{I}}$ set $g(\lambda_-,A)=\sum_{z^1,z^2}\tilde{C}_{z^1,z^2}(\lambda,A,S,S)$. We assume that the homogeneous Markov chain defined by this kernel is an ergodic Harris chain (cf. \cite{R84}) with stationary distribution distribution $\rho_s$.
\item (Bounded range) Also $\tilde{C}_{z^1,z^2}=0$ for $|z^1|,|z^2|>1$
\end{enumerate}}

\subsubsection{Main result}

The main result of this paper concerns the limit distribution of the two interacting random walkers described in subsection 2.1.

\begin{theorem}\label{main_thm_1}

For every initial distribution of $(\xi^1_0, \xi^2_0)$, the density function of the weak limit law of $\tilde{J}/\sqrt{t}$ in $((\mathbb{R}^2\times S)^2,d)$ exists and is equal to
\[
h(x_1,u_1,x_2,u_2)=\frac{\rho(u_1)\rho(u_2)}{(2\pi)^2|\sigma|}\int_0^{1}\frac{1}{\lambda\sqrt{1-\lambda^2}}e^{-\frac{1}{2}\left(\frac{x_1^T\sigma^{-1}x_1}{\lambda}+\frac{x_2^T\sigma^{-1}x_2}{\sqrt{1-\lambda^2}}\right)}d\rho_s(\lambda)
\]
where $\rho$ is the stationary density of the internal states on $S$.
\end{theorem}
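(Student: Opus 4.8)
The plan is to localize everything at the \emph{last collision before time $t$} and then run the two walkers freely from there. Let $0<\tau_1<\tau_2<\cdots$ be the successive collision times (the times $r$ with $\eta^1_r=\eta^2_r$, each of which triggers a collision at rate $\tilde\lambda$), let $N(t)=\#\{k:\tau_k\le t\}$, and let $\Lambda_k=\Lambda_{\tau_k+}$ be the energy partition installed by the $k$-th collision. Between collisions the relative walk $D_r=\eta^1_r-\eta^2_r$ is a recurrent RWwIS on $\integers^2$ with nonsingular asymptotic covariance, and $\tau_{k+1}-\tau_k$ is its return time to $0$; by the local limit theorem of Section~3 together with the return-probability estimates of \cite{DSzV08} this return time has a \emph{slowly varying} tail. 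After the mesoscopic averaging over incoming directions (molecular chaos, justified because short inter-collision times are asymptotically negligible by Theorem~\ref{Haeusler-Mason}), the sequence $\bigl(\Lambda_k,\ \tau_{k+1}-\tau_k,\ \text{positions and directions at }\tau_k\bigr)$ becomes a Markov renewal process — the STMRP of Subsection~2.3 — whose modulating chain $\{\Lambda_k\}$ has kernel $g$ and is ergodic with stationary law $\rho_s$, and whose inter-arrival tails are slowly varying uniformly in the modulating state.

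Next comes the renewal step: apply the STMRP limit theorems of the appendices (of Dynkin--Lamperti type, cf. Theorem~\ref{Haeusler-Mason}). Because the inter-arrival tails are regularly varying of index $0$, the arcsine law degenerates, so that $\tau_{N(t)}/t\to 0$ in probability while $N(t)\to\infty$; ergodicity of $\{\Lambda_k\}$ then forces $\Lambda_{N(t)}\Rightarrow\rho_s$, and — this is the point of the STMRP formalism — jointly with, and asymptotically independently of, the normalized age $\tau_{N(t)}/t$ and the position/direction data accumulated up to $\tau_{N(t)}$. In particular, for any initial distribution of $(\xi^1_0,\xi^2_0)$ one has $\eta^i_{\tau_{N(t)}}=o(\sqrt t)$ and $D_{\tau_{N(t)}}=0$ in probability, and the dependence on $J_0$ is washed out.

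Now the free-evolution step: condition on $\Lambda_{N(t)}=\lambda$ and on the data at $\tau_{N(t)}$. On $(\tau_{N(t)},t]$ the two walkers move independently with rates $\lambda_1=\lambda$ and $\lambda_2=\sqrt{1-\lambda^2}$ for a time $\sim t$, \emph{conditioned} on having no further collision, i.e.\ on $D$ avoiding $0$ throughout an interval of length $\sim t$. This conditioning is a Doob $h$-transform of the recurrent planar walk $D$ by its potential kernel $h(x)\sim c\log|x|$; since the relative oscillation of $\log|x|$ over the diffusive range $|x|\asymp\sqrt t$ tends to $0$, the transform is invisible after rescaling by $\sqrt t$. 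Hence, applying the local limit theorem of Section~3 to each walker run for time $t-\tau_{N(t)}\sim t$ from the tight initial data, and using independence on this interval,
\[
\Bigl(\tfrac{\eta^1_t}{\sqrt t},\varepsilon^1_t,\tfrac{\eta^2_t}{\sqrt t},\varepsilon^2_t\Bigr)\Big|_{\Lambda_{N(t)}=\lambda}\ \Longrightarrow\ \mathcal{N}(0,\lambda\sigma)\otimes\rho\otimes\mathcal{N}\bigl(0,\sqrt{1-\lambda^2}\,\sigma\bigr)\otimes\rho .
\]
In $\reals^2$ the density of $\mathcal{N}(0,\mu\sigma)$ equals $(2\pi\mu|\sigma|^{1/2})^{-1}\exp\bigl(-x^{T}\sigma^{-1}x/2\mu\bigr)$, since $|\mu\sigma|=\mu^2|\sigma|$; multiplying the $\mu=\lambda$ and $\mu=\sqrt{1-\lambda^2}$ densities, tensoring with $\rho(u_1)\rho(u_2)$, and integrating against the limit law $\rho_s$ of $\Lambda_{N(t)}$ — legitimate by the asymptotic independence from the renewal step — produces exactly the claimed $h$.

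The main obstacle is the junction of the last two steps. Establishing that, after mesoscopic averaging, the collision sequence is faithfully an STMRP with slowly varying (index-$0$) inter-arrival tails requires the local CLT and local-return estimates for the relative walk, while controlling the ``no re-collision on the last, macroscopically long, interval'' conditioning requires a quantitative form of the statement that conditioning a recurrent planar RWwIS to avoid the origin does not perturb its diffusive limit — delicate precisely because planar walks are recurrent, so the conditioning event has vanishing ($\asymp 1/\log t$) probability. Making the Dynkin--Lamperti degeneration and the accompanying joint asymptotic independence rigorous in the Markov-modulated setting is the technical heart, and is the role of the MRP/STRP/STMRP results proved in the appendices.
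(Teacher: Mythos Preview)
Your overall architecture---decompose at the last collision time $\tau_{N(t)}$, invoke the STMRP results to get $\tau_{N(t)}/t\to0$ and $\Lambda_{N(t)}\Rightarrow\rho_s$, show the collision position is $o(\sqrt t)$, then analyze the free evolution on $(\tau_{N(t)},t]$ conditioned on no re-collision---is exactly the paper's. The paper formalizes it via the decomposition $\tilde J_t/\sqrt t=\sqrt{(t-\tau(t))/t}\cdot(\tilde J_{t-\tau(t)}/\sqrt{t-\tau(t)})|_{J_{\tau(t)}}+\tilde J_{\tau(t)}/\sqrt t$ together with Lemma~\ref{place_of_last_coll} and Theorem~\ref{limstatdistr}, just as you outline.

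The substantive divergence is in the conditioning step, and here your argument has a gap. You treat the event $\{D_s\neq0,\ s\in(\tau_{N(t)},t]\}$ as a Doob $h$-transform of the \emph{relative} walk $D$ by its potential kernel, note that $\log|x|$ has vanishing relative oscillation on the diffusive scale, and conclude the transform is invisible. That heuristic speaks to the scaling limit of $D$ alone; you then write ``using independence on this interval'' to pass to the \emph{joint} law of $(\eta^1_t,\eta^2_t)$. But the walkers are not independent under this conditioning---the event depends on both---and since they run at different rates $\lambda$ and $\sqrt{1-\lambda^2}$ there is no sum/difference decomposition into independent components. What is actually required is that $\Prob(D_s\neq0,\,s\le t\mid \eta^1_t=x_1\sqrt t,\,\eta^2_t=x_2\sqrt t)/\Prob(D_s\neq0,\,s\le t)\to1$ uniformly in $(x_1,x_2)$, which is strictly more than the $h$-transform of $D$ delivers.

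The paper sidesteps this with a Bolthausen-type identity (Lemma~\ref{Bolthausen_identity}). It introduces the random time $T(t)=\inf\{s:\eta^1_s=\eta^2_s,\ \eta^1_r\neq\eta^2_r\text{ for }r\in[t_{\rm fj}(s),s+t]\}$ and shows that the \emph{conditional} law of $\tilde J^{id}_t$ given no collision on $[t_{\rm fj}(0),t]$ coincides with the \emph{unconditional} law of the increment $\tilde J^{id}_{T(t)+t}-\tilde J^{id}_{T(t)}$ (with matched initial internal states). Since $T(t)/t\stackrel{P}{\to}0$, this increment has the same scaling limit as the unconditioned $\tilde J^{id}_t$, which factorizes by genuine independence---no $h$-transform analysis of the joint law is needed. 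The device converts a vanishing-probability conditioning into a negligible time-shift of the unconditioned two-particle process.
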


\subsubsection{Higher dimensions}
In higher dimensions, the treatment is essentially the same as well as the result with one exception. If $F$ is the probability that two particles starting from the same place will meet again, then instead of $\rho_s$ we have to use the mixture
\[
\tilde{\rho}_{\lambda_0}(A)=(1-F)\sum_{n=0}^{\infty}g^n(\lambda_0,A)F^n
\]
where $g^n$ is the $n$ step kernel (Note that $F<1$ if $d\ge 3$).

Note that $F$ depends on $\varepsilon_0^1$ and $\varepsilon_0^2$ but for the same reasons as above, we can assume that it is in principle possible to average over them with respect to some certain distribution. This problem is strongly related to the question when one asks what the distribution of the internal states is at the first return to the origin in the case of a single random walker. We do not discuss this in further detail.

\subsection{Methods}

\subsubsection{Local limit theorem for RWwIS's}\label{loc_lim_disc_time}

Our first step will be to generalize the local limit theorem described in \cite{KSz83} to continuous time and continuous internal states. Since now we only investigate the collision-free motion of one particle, the velocity magnitude $\lambda$ will be constant, thus we will consider it as a parameter and the internal state space will be $S$.

\begin{theorem}[Local Limit Theorem]\label{thm:Local_Limit_Theorem}
With the assumptions in \ref{Notations_and_Conditions}, namely
\begin{enumerate}
\item[(i)] $Q_S$ is ergodic and aperiodic with stationary distribution $\rho$;
\item[(ii)] $(\rho,M_l\mathbb{1})=0$ for every $1\leq l\leq d$;

\item[(iii)] The matrix $\sigma=(\sigma_{lm})_{1\leq l,m \leq d}$ whose elements are
\begin{align*}
\sigma_{lm}=(\rho,\sigma_{lm}\mathbb{1})-(\rho,M_l(Q_S-I)^{-1}M_m\mathbb{1})-\\
-(\rho,M_m(Q_S-I)^{-1}\mathbb{1})
\end{align*}
is positive definite.
\end{enumerate}
we have for every $A\subseteq S$
\begin{equation*}
\sum_{x\in\integers^d}\left|h_{t,x}(A|\xi_0=(0,u_0))-\frac{\rho(A)}{(\lambda t)^{d/2}}g_{\sigma}\left(\frac{x}{\sqrt{\lambda t}}\right)\right|=\mathcal{O}\left((\lambda t)^{-(d+1)/2}\right)
\end{equation*}
where
\[
h_{t,x}(A)=\Prob(\xi_t\in\{x\}\times A)
\]
and $g_{\sigma}$ denotes the density function of the $d$ dimensional normal distribution with mean $0$ and covariance matrix $\sigma$. The remainder term is uniform in $u_0$.
\end{theorem}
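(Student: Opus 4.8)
The plan is to prove the local limit theorem by Fourier (characteristic function) analysis in the spatial variable, following the scheme of Kr\'amli--Sz\'asz \cite{KSz83} but adapting it to continuous time and a continuous internal state space $S$. For each $\theta\in[-\pi,\pi]^d$ introduce the Fourier-transformed transition operator
\[
\widehat{Q}_\theta = \sum_{x\in\integers^d} e^{i(\theta,x)} P_x^S : L_\infty(S)\to L_\infty(S),
\]
so that $\widehat{Q}_0 = Q_S$. Because the jump process is a Poisson clock of rate $\lambda$ subordinating the discrete chain with kernel $Q_S$, the semigroup acting on the internal state together with the spatial phase satisfies
\[
\EXP\bigl(e^{i(\theta,\eta_t)} f(\varepsilon_t)\,\big|\,\varepsilon_0=u_0\bigr) = e^{-\lambda t}\sum_{n\ge 0}\frac{(\lambda t)^n}{n!}\,(\widehat{Q}_\theta^{\,n} f)(u_0) = \bigl(e^{-\lambda t(I-\widehat{Q}_\theta)}f\bigr)(u_0).
\]
Thus the spatial Fourier transform of $h_{t,x}(A)$ is $\bigl(e^{-\lambda t(I-\widehat{Q}_\theta)}\one_A\bigr)(u_0)$ (paired with the appropriate measure), and the whole problem reduces to understanding $e^{-\lambda t(I-\widehat{Q}_\theta)}$ for $t$ large, uniformly in $\theta$ and $u_0$.

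The core of the argument is a perturbation analysis of $\widehat{Q}_\theta$ near $\theta=0$. By the spectral gap condition (i), $Q_S$ has a simple eigenvalue $1$ with spectral projection $\Pi_0 f = \rho(f)\one$, isolated from the rest of the spectrum which lies in $\{|z|\le\delta<1\}$. Since $\theta\mapsto\widehat{Q}_\theta$ is analytic (indeed a trigonometric polynomial in $\theta$ by the bounded-range condition (iii')), standard analytic perturbation theory gives, for $|\theta|$ small, a simple eigenvalue $\mu(\theta)$ with $\mu(0)=1$, an eigenprojection $\Pi_\theta$ analytic in $\theta$, and a uniform spectral gap between $\mu(\theta)$ and the remainder $R_\theta := \widehat{Q}_\theta(I-\Pi_\theta)$ whose spectral radius stays $\le\delta' <1$. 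Computing the Taylor expansion of $\mu(\theta)$: the first-order term is $i(\rho,\sum_l\theta_l M_l\one) = 0$ by the no-drift condition (ii), and the second-order term, after the usual manipulation involving $(Q_S-I)^{-1}$ restricted to the complement of the eigenspace, equals $-\tfrac12(\theta,\sigma\theta)$ with $\sigma$ the matrix in (iii); hence $\mu(\theta) = 1 - \tfrac12(\theta,\sigma\theta) + O(|\theta|^3)$. Therefore on the principal part $e^{-\lambda t(I-\widehat{Q}_\theta)}\Pi_\theta$ behaves like $e^{-\lambda t(1-\mu(\theta))}\Pi_\theta \approx e^{-\frac{\lambda t}{2}(\theta,\sigma\theta)}\Pi_0$, while the complementary part $e^{-\lambda t(I-\widehat{Q}_\theta)}(I-\Pi_\theta)$ decays like $e^{-c\lambda t}$ uniformly.

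With this in hand the proof concludes by Fourier inversion: writing
\[
h_{t,x}(A) = \frac{1}{(2\pi)^d}\int_{[-\pi,\pi]^d} e^{-i(\theta,x)}\,\bigl(\text{measure applied to }e^{-\lambda t(I-\widehat{Q}_\theta)}\one_A\bigr)(u_0)\,d\theta,
\]
one rescales $\theta = \phi/\sqrt{\lambda t}$, splits the integral into the region $|\phi|\le \eps\sqrt{\lambda t}$ (handled by the spectral expansion above, where the principal term reproduces the Gaussian density $\rho(A)(\lambda t)^{-d/2} g_\sigma(x/\sqrt{\lambda t})$ after Gaussian inversion, the contribution of the complementary part being exponentially small, and the $O(|\theta|^3)$ correction to $\mu$ producing exactly the claimed $O((\lambda t)^{-(d+1)/2})$ error once summed over $x\in\integers^d$) and the region $|\theta|\ge\eps$ (where, by aperiodicity, $\widehat{Q}_\theta$ has spectral radius strictly less than $1$, so $e^{-\lambda t(I-\widehat{Q}_\theta)}$ is exponentially small, negligible after summation). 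Summing the absolute value over $x$ is where one must be careful: one uses that the Fourier transform of the error, as a function of $\theta$, has enough regularity (two derivatives, from the trigonometric-polynomial structure and the analyticity of $\mu,\Pi_\theta$) so that integration by parts in $\theta$ gives decay $|x|^{-2}$, making $\sum_x$ of the error convergent with the stated order.

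The main obstacle I expect is purely technical rather than conceptual: carrying the perturbation-theoretic estimates on $e^{-\lambda t(I-\widehat{Q}_\theta)}$ \emph{uniformly in $u_0$ and in $\theta$} over the continuous state space $S$, in particular verifying that the eigenprojection $\Pi_\theta$ and the resolvent of $R_\theta$ are bounded on $L_\infty(S)$ uniformly for $\theta$ in a neighborhood of $0$ and that the remainder semigroup's operator norm (not just spectral radius) decays exponentially --- this last point may require passing through a suitable power of $\widehat{Q}_\theta$ or invoking a quasi-compactness/Doeblin-type argument for the kernel, exactly as the bounded-range plus spectral-gap hypotheses are designed to permit. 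The bookkeeping of the $\ell^1(\integers^d)$-norm of the error term, ensuring the $O((\lambda t)^{-(d+1)/2})$ rate survives the summation over $x$, is the other place demanding care.
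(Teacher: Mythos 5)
Your architecture is the same as the paper's: Poisson subordination of the discrete chain, the operator-valued Fourier transform $\alpha(s)=\sum_y e^{i(s,y)}P_y$ (your $\widehat{Q}_\theta$), Kato perturbation theory giving a simple top eigenvalue with first-order term killed by the no-drift condition and second-order term $-\tfrac12(\theta,\sigma\theta)$, a norm bound $\|R(s)\|<1$ for the complementary spectral part, splitting of the inversion integral into a central region, a Gaussian tail, an intermediate region, and the region of large frequencies handled by aperiodicity/trivial arithmetics. Up to that point the proposal and the paper coincide step by step.

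The gap is in the final step, where the rate $\mathcal{O}\bigl((\lambda t)^{-(d+1)/2}\bigr)$ and the summation over $x$ are produced. The paper does not stop at $\mu(\theta)=1-\tfrac12(\theta,\sigma\theta)+O(|\theta|^3)$: following N\'andori it introduces the third-moment operator $\Xi$, computes the third-order coefficient $r_3$ of the top eigenvalue, and proves the local expansion against the Gaussian multiplied by the explicit Edgeworth factor $\bigl(1-\tfrac{ir_3}{6}\tfrac{x(3\sigma^2\lambda t-x^2)}{\sigma^6(\lambda t)^2}\bigr)$; the stated error is read off from that corrected expansion (and it is this uniform-in-$x$ corrected estimate, e.g.\ at $x=0$, that Corollary \ref{tail} actually uses). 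Your substitute --- that the $O(|\theta|^3)$ remainder yields the claimed error ``once summed over $x$'', with summability coming from two $\theta$-derivatives giving $|x|^{-2}$ decay --- does not hold as stated. First, $\sum_{x\in\integers^d}|x|^{-2}$ diverges already for $d=2$, the case the paper needs, so two derivatives cannot close the $\ell^1$ bound; one would need at least $d+1$ derivatives or a split into $|x|\lesssim\sqrt{\lambda t\,\log(\lambda t)}$ plus a Gaussian-type tail bound, and in any case mere convergence of the sum is not the point: the sum must come out with the right power of $\lambda t$, which requires tracking the $t$-dependence attached to each differentiation, not only the $x$-decay. Second, what your second-order expansion genuinely delivers is a pointwise bound $O\bigl((\lambda t)^{-(d+1)/2}\bigr)$ uniform in $x$; summed over the $\sim(\lambda t)^{d/2}$ relevant sites this gives only $O\bigl((\lambda t)^{-1/2}\bigr)$, so to claim anything sharper one must isolate and control the order-$(\lambda t)^{-1/2}$ correction explicitly --- which is exactly the role of the $r_3$ term in the paper. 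As written, the last step of your proposal does not establish the theorem; you should either import the third-order (Edgeworth) expansion as the paper does, or considerably strengthen and quantify the summation argument.
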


\begin{remark}
For the sake of generality we will prove this theorem without the bounded range condition only assuming that all moments like \eqref{eq:moments} are finite and the minimal lattice $\mathcal L$ determined by the jumps of the RWwIS coincides with  $\mathbb Z^d$ (cf. local CLT of \cite{KSz83}). (This condition is called the triviality of arithmetics.)
\end{remark}

We will show that -- for our case $d=2$ -- this implies
\begin{corollary}\label{tail}
Assume $d=2$. If $\tau=\inf\{t>t_{\rm fj}(0):\eta_t=0\}$, then $\forall u_0 \in S$
\begin{equation}\label{eq:First_Return1}
F_{\lambda,u_0}(t)=\Prob(\tau<t|\xi_0=(0,u_0))=1 - \frac{2\pi\sqrt{|\sigma|}}{\log (\lambda t)}+\mathcal{O}\left(\frac{\log\log(\lambda t)}{\log^2(\lambda t)}\right)
\end{equation}
where the remainder term is uniform in $u_0$.
\end{corollary}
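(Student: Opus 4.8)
The plan is to derive the first-return asymptotics for the planar RWwIS from the Local Limit Theorem by the standard renewal/generating-function argument, being careful that all error terms are uniform in the internal state $u_0$. First I would reduce to the case $\lambda=1$ by the obvious time rescaling (the factor $\lambda$ only enters through $\lambda t$ in the LLT, so it passes transparently to the conclusion). Then, writing $p_n(u_0)=\Prob(\eta_{t_{\rm fj}^{(n)}}=0,\ \xi_{t_{\rm fj}^{(n)}}\in\{0\}\times S\mid \xi_0=(0,u_0))$ in terms of the embedded discrete-time chain, or equivalently working directly with the continuous-time return probability $u(t)=\Prob(\eta_t=0\mid\xi_0=(0,u_0))$, the LLT with $d=2$, $A=S$ gives $u(t)=\frac{1}{2\pi t\sqrt{|\sigma|}}+\mathcal O(t^{-3/2})$ uniformly in $u_0$ (here I use $g_\sigma(0)=\frac{1}{2\pi\sqrt{|\sigma|}}$ and $\rho(S)=1$).

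Next I would set up the renewal relation. Let $f(t)\,dt$ be the (defective-free, since the walk is recurrent in $d=2$) density of the first return time $\tau$ to the origin after the first jump, and let $F(t)=\int_0^t f$. The key identity is the convolution relation $u(t)=\delta\text{-mass at }0 + (f*u)(t)$ in an appropriate sense; passing to Laplace transforms, $\hat u(s)=\frac{1}{\lambda}\cdot\frac{1}{1-\hat f(s)}$ up to the correct normalization for the exponential holding times (the holding-time density is itself exponential with rate $\lambda=1$, so this only contributes a bounded multiplicative factor and a shift that does not affect the logarithmic leading order). The behavior of $\hat u(s)$ as $s\to 0^+$ is governed by $u(t)\sim\frac{c}{t}$ with $c=\frac{1}{2\pi\sqrt{|\sigma|}}$, which by a Tauberian/Abelian computation gives $\hat u(s)=c\log(1/s)+\mathcal O(1)$, and more precisely $\hat u(s)=c\,\log(1/s)+\mathcal O(\log\log(1/s))$-type control once the $\mathcal O(t^{-3/2})$ remainder in the LLT is fed in. Inverting $\hat f=1-1/\hat u$ then yields $1-\hat f(s)=\frac{1}{c\log(1/s)}\big(1+\mathcal O(\frac{1}{\log(1/s)})\big)$, and a final Tauberian theorem for slowly varying tails (de Haan theory, as in Feller; the relevant input is exactly the ``slowly varying tail'' observation made in the Introduction) converts this into $1-F(t)=\frac{1}{c\log t}+\mathcal O\!\left(\frac{\log\log t}{\log^2 t}\right)=\frac{2\pi\sqrt{|\sigma|}}{\log t}+\mathcal O\!\left(\frac{\log\log t}{\log^2 t}\right)$, which is the claim after reinstating $\lambda$.

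The main obstacle I anticipate is controlling the error term at the stated precision $\mathcal O(\log\log(\lambda t)/\log^2(\lambda t))$ and, simultaneously, its uniformity in $u_0$. The leading term $1/\log$ is soft, but the second-order $\log\log/\log^2$ correction requires tracking the $\mathcal O(t^{-3/2})$ remainder of the LLT through the Laplace transform and through the Tauberian inversion without losing a power of $\log$; the cleanest route is probably to work with $\hat u$ on the line $\mathrm{Re}\,s>0$, split $\int_0^\infty e^{-st}u(t)\,dt$ at $t\asymp 1/s$, and estimate the two pieces separately, then use that $x\mapsto x/(1+x)$ is Lipschitz near $0$ to transfer the expansion from $\hat u$ to $1-\hat f$. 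Uniformity in $u_0$ is then automatic provided one invokes the uniform-in-$u_0$ form of the LLT at every step, which is exactly how it is stated in Theorem~\ref{thm:Local_Limit_Theorem}; the only place care is needed is that the implied constants in the Laplace-domain estimates depend on the LLT constant and on $\sigma$ alone, not on $u_0$. A secondary technical point is the precise bookkeeping of the exponential holding times when moving between the discrete embedded chain and the continuous-time process, but since the holding times have exponential moments this contributes only $\mathcal O(1)$ in the Laplace domain and is harmless.
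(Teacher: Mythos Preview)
Your route is different from the paper's. The paper does not run the Laplace/Tauberian machinery in continuous time at all: it first quotes the discrete-time result
\[
f_n=\Prob(\tau_d>n\mid \eta_0=0)=\frac{2\pi\sqrt{|\sigma|}}{\log n}+\mathcal O\!\left(\frac{\log\log n}{\log^2 n}\right)
\]
from the proof of Theorem~6 in \cite{N09} (which already handles the internal states via the operator-valued generating function of \cite{KSz83}), and then passes to continuous time by Poissonization,
\[
1-F_\lambda(t)=e^{-\lambda t}\sum_{n\ge 0}\frac{(\lambda t)^n}{n!}\,f_n,
\]
truncating the sum to $|n-\lambda t|\le \lambda t/2$ by Chebyshev and using $\log(\lambda t)/\log n=1+\mathcal O(1/\log(\lambda t))$ on that window. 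This is short precisely because the delicate step (getting the $\log\log/\log^2$ remainder in the presence of internal states) is outsourced to \cite{N09}, and the Poisson averaging preserves it trivially.

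Your outline is the classical Dvoretzky--Erd\H{o}s/Kesten argument and is viable, but there is a genuine gap at the ``key identity'' $u=\delta_0+f*u$. For a RWwIS this relation is \emph{not} scalar: the internal state at the first return is random, so the correct first-passage decomposition is
\[
u_{u_0}(t)=e^{-\lambda t}+\int_0^t\!\!\int_S \Prob(\tau\in ds,\ \varepsilon_\tau\in dv\mid u_0)\,u_v(t-s),
\]
an operator-valued renewal equation on $L_\infty(S)$. Uniformity of the LLT remainder in $u_0$ does not by itself let you collapse this to a scalar convolution; what you actually need is that in Laplace variables the leading part of the operator $\hat U(s)$ is $c\log(1/s)$ times the rank-one projection $P_{\mathbb 1}$ with an $\mathcal O(1)$ operator remainder, so that $I-\hat U(s)^{-1}$ can be expanded. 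That inversion is exactly the content of \cite{N09} (building on \cite{KSz83,KSz84}); once you supply it your Tauberian step goes through, but at that point you have essentially redone the discrete argument, and the paper's Poissonization is then the cheaper bridge to the continuous-time statement. A second, smaller point: obtaining the sharp $\mathcal O(\log\log/\log^2)$ from Tauberian inversion alone needs a remainder-form Tauberian theorem (de Haan's $\Pi$-variation with error), not just Karamata; you flag this, but it is where most of the actual work would sit.
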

Recall the definition of slowly varying functions
\begin{definition}
A positive function $L(t)$ defined on $\reals_+$ is slowly varying at infinity if
\[
\frac{L(ct)}{L(t)}\to 1\qquad\forall c\in\reals_+
\]
\end{definition}
Clearly, the tail of the distribution function $F_{\lambda,u_0}(t)$ is a slowly varying function, which proves to be a crucial property later.

By Definition \ref{speed_param},  $\lambda$  is just  the speed of the random walker and  the property
\begin{equation}\label{eq:scaling}
F_{\lambda,u_0}(t)=F_{1,u_0}(\lambda t)
\end{equation}
is evident by rescaling.

\begin{remark}
As it will be shown in Section \ref{STRP}, these excursions are very long, so it is plausible to assume by Theorem \ref{thm:Local_Limit_Theorem} that instead of \eqref{eq:First_Return1}, we only have to deal with the family of functions
\[
F_{\lambda}(t)=\int F_{\lambda,{u_0}}(t)d\rho(u_0)
\]
The existence of this average is granted by the uniformity of the above expansion of $F_{\lambda,u_0}$ in $u_0$. Rigorously, for every $u_0$, $(1-F_{\lambda}(t))/(1-F_{\lambda, u_0}(t))\to 1$ uniformly as $t\to\infty$, and this is sufficient in the sequel (cf. the proofs in the appendix).
\end{remark}

\begin{corollary}[Central Limit Theorem]\label{CLT}
With the assumptions of the local theorem,
\[
\left(\frac{\eta_t}{\sqrt{t}},\epsilon_t\right)\Rightarrow \mathcal{N}^d(0,\lambda\sigma)\times\rho\qquad\textrm{in}\quad (\mathbb{R}^d\times S,d_0)
\]
where $\mathcal{N}^d(0,\lambda\sigma)$ is the $d$-dimensional normal distribution with mean $0$ and covariance matrix $\lambda\sigma$. The metric $d_0$ is the sum of the euclidean metric on $\mathbb{R}^d$ and the previously defined $d_S$.
\end{corollary}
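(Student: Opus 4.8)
The plan is to deduce the central limit theorem directly from the local limit theorem (Theorem \ref{thm:Local_Limit_Theorem}) by testing against bounded continuous functions. Fix a bounded continuous $f\colon(\reals^d\times S,d_0)\to\reals$. Since $\xi_t=(\eta_t,\varepsilon_t)$ takes values in $\integers^d\times S$ and $h_{t,x}(\cdot)=\Prob(\xi_t\in\{x\}\times\cdot)$, one can write
\[
\EXP\Big[f\Big(\tfrac{\eta_t}{\sqrt t},\varepsilon_t\Big)\Big]=\sum_{x\in\integers^d}\int_S f\Big(\tfrac{x}{\sqrt t},u\Big)\,h_{t,x}(du).
\]
By the portmanteau theorem, it suffices to show that this converges to $\int_{\reals^d\times S}f\,d(\mathcal N^d(0,\lambda\sigma)\times\rho)$ for every such $f$ (and then, using that the LLT remainder is uniform in $u_0$, the same argument applies verbatim to an arbitrary initial distribution of $\xi_0$ by integrating over it).

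Next I would insert the local limit theorem. Using that $|\int_S f(\tfrac{x}{\sqrt t},u)\,\mu(du)|\le\|f\|_\infty\|\mu\|$ in total variation, Theorem \ref{thm:Local_Limit_Theorem} gives
\[
\EXP\Big[f\Big(\tfrac{\eta_t}{\sqrt t},\varepsilon_t\Big)\Big]=\frac{1}{(\lambda t)^{d/2}}\sum_{x\in\integers^d}g_\sigma\Big(\tfrac{x}{\sqrt{\lambda t}}\Big)\,\bar f\Big(\tfrac{x}{\sqrt t}\Big)+\mathcal O\big(\|f\|_\infty(\lambda t)^{-1/2}\big),
\]
where $\bar f(z)=\int_S f(z,u)\,\rho(du)$ is bounded and, by dominated convergence, continuous. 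The main sum is a Riemann sum over the lattice $(\lambda t)^{-1/2}\integers^d$ (cell volume $(\lambda t)^{-d/2}$) of the function $y\mapsto g_\sigma(y)\bar f(\sqrt\lambda\,y)$, which is bounded, continuous, and integrable with a contribution from $|y|>R$ that is negligible as $R\to\infty$, uniformly in $t$, because $g_\sigma$ has Gaussian tails; hence it converges to $\int_{\reals^d}g_\sigma(y)\,\bar f(\sqrt\lambda\,y)\,dy$. The substitution $z=\sqrt\lambda\,y$ turns $(2\pi)^{-d/2}|\sigma|^{-1/2}\lambda^{-d/2}\exp(-z^T\sigma^{-1}z/2\lambda)$ into the density $g_{\lambda\sigma}(z)$ of $\mathcal N^d(0,\lambda\sigma)$, so the limit equals $\int_{\reals^d}\bar f(z)\,g_{\lambda\sigma}(z)\,dz=\int f\,d(\mathcal N^d(0,\lambda\sigma)\times\rho)$, which is exactly the claim.

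The argument is essentially bookkeeping once the LLT is available; the only step requiring (minor) care is the passage from the discrete sum to the Gaussian integral — i.e. verifying the Riemann-sum convergence together with the uniform tail bound, and sending the $\mathcal O((\lambda t)^{-1/2})$ error and the tail cutoff to zero in the correct order. This is where I expect the ``hard part'' to lie, but it is routine: it uses only continuity and integrability of $g_\sigma\bar f$ and the exponential decay of $g_\sigma$, with no further input about the chain beyond what Theorem \ref{thm:Local_Limit_Theorem} already supplies. (An alternative route — establishing the two marginal CLTs separately and then asymptotic independence of $\eta_t/\sqrt t$ and $\varepsilon_t$ — also works, but the direct test-function computation above is cleaner and handles the joint law in one step.)
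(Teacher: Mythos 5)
Your argument is correct, but it is not the route the paper takes. The paper's own proof goes back to the spectral decomposition \eqref{al} used inside the proof of the local theorem: it lets $t\to\infty$ there to get $e^{\lambda t(\alpha^*(s/\sqrt{\lambda t})-1)}\delta_{u_0}\to e^{-\sigma^2 s^2/2}P^*_{\mathbb 1}\delta_{u_0}=e^{-\sigma^2 s^2/2}\rho$, i.e.\ pointwise convergence of the (operator-valued) characteristic function, and concludes by the continuity theorem; the paper explicitly remarks that a derivation ``directly from the local theorem'' is possible but opts for the Fourier route as simpler. Your proposal carries out exactly that direct derivation: portmanteau with bounded continuous $f$, substitution of the LLT approximation with the summed error term, and a Riemann-sum-plus-Gaussian-tail argument followed by the change of variables producing the covariance $\lambda\sigma$. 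The trade-off: your version is self-contained given only the \emph{statement} of Theorem \ref{thm:Local_Limit_Theorem} and treats the joint law (including the $\epsilon_t$ marginal $\rho$) in one stroke, at the price of the Riemann-sum bookkeeping; the paper's version is essentially one line but reuses machinery internal to the LLT proof and leaves the continuity-theorem step for the mixed lattice/compact state space implicit. One small caveat on your step 2: the LLT is stated for each fixed $A\subseteq S$, whereas you invoke a total-variation bound per site $x$ (supremum over $A$ inside the sum over $x$). This is harmless — either observe that the paper's proof of the LLT actually estimates measure norms, so the remainder is uniform in $A$, or avoid total variation altogether by using compactness of $S$ and uniform continuity of $f$ to approximate $f(z,u)$ by finitely many terms $f(z,u_j)\mathbb{1}_{A_j}(u)$ and apply the stated LLT to each of the finitely many sets $A_j$ — but it deserves a sentence.
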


\subsubsection{Scaled Type Renewal Processes}\label{STRP}

Since the speed of each individual particle is conserved between consecutive collisions and the asymptotic inter-collision times satisfy \eqref{eq:scaling}, the idea naturally arises that one should deal with a renewal process, where the renewal times come from a one-parameter family of distributions like \eqref{eq:scaling}.

Consider a family of distribution functions $F_{\lambda}:$ $\lambda\in[a,b]\subset\reals_+$ with positive support and assume that $\forall \lambda\in[a,b]$ $F_{\lambda}(t) = F(\lambda t)$ for a nondegenerate distribution function $F$. The corresponding random variables are denoted by $X_{\lambda}$.

\begin{definition}

Suppose $(\lambda_0,\lambda_1, \lambda_2, \dots) \in [a, b]^\mathbb N$. Then the sequence $S_n = \sum_{j=0}^n X_{\lambda_j}:\ n = 0, 1, 2, \dots$ is called a {\rm scaled-type renewal process (STRP)} if $X_{\lambda_0},X_{\lambda_1}, X_{\lambda_2}, \dots$ is an independent sequence of random variables such that $\forall j \in \mathbb N$ the distribution of $X_{\lambda_j}$ is $F_{\lambda_j}$.

\end{definition}

\noindent\textbf{STRP with slow tail return times}

As it was previously mentioned, in the model under investigation the return times are very long, more specifically they satisfy the slowly varying tail property.

When  $\forall i$\ $\lambda_i=1$ (or any constant), there are many well known results, among which the one revealing the core of the phenomena is the following (cf. \cite{HM91}).

\begin{theorem}\label{Haeusler-Mason}
Let $X_i\geq 0$ be random variables with common distribution function F. For every $k\geq 0$, the following three statements are equivalent.

\begin{itemize}
\item $1-F(x)=L(x)$, where $L$ is a slowly varying function
\item $X_{n-k-1,n}/X_{n-k,n}\stackrel{P}{\to}0$ as $n\to\infty$
\item $\sum_{i=1}^{n-k}X_{i,n}/X_{n-k,n}\stackrel{P}{\to}1$ as $n\to\infty$
\end{itemize}
where $X_{i,n}$ is the ordered statistics from $X_i$ $i=1,..,n$.
\end{theorem}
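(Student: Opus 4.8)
The plan is to prove the chain $(1)\Rightarrow(3)\Rightarrow(2)\Rightarrow(1)$. Writing $L=1-F$ and $Q=F^{\leftarrow}$, I would first record two standard reductions. (a) \emph{Conditional product structure}: conditionally on $X_{n-k,n}=x$, the $n-k-1$ smallest observations are i.i.d.\ with law $F$ restricted to $[0,x]$, so for $0<\epsilon<1$ one has $P(X_{n-k-1,n}\le\epsilon x\mid X_{n-k,n}=x)=(F(\epsilon x)/F(x))^{n-k-1}$ and $E[\sum_{i=1}^{n-k-1}X_{i,n}\mid X_{n-k,n}=x]=(n-k-1)m(x)$ with $m(x)=F(x)^{-1}(\int_0^x L(y)\,dy-xL(x))\ge 0$ (the inequality because $L$ is nonincreasing). (b) \emph{Poisson localization of the extremes}: since $P(X_{n-k,n}\le y)=P(\mathrm{Bin}(n,L(y))\le k)$, whenever $nL(y_n)\to t$ we have $P(X_{n-k,n}\le y_n)\to P(\mathrm{Pois}(t)\le k)$; in particular $nL(X_{n-k,n})$ is tight, $X_{n-k,n}\to\infty$ in probability, and for each fixed $K$ the event $\{nL(X_{n-k,n})\le K\}$ coincides with $\{X_{n-k,n}\ge Q(1-K/n)\}$.

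Two of the implications are easy. $(3)\Rightarrow(2)$ is immediate, since $X_{n-k-1,n}/X_{n-k,n}$ is one of the nonnegative summands of $\sum_{i=1}^{n-k-1}X_{i,n}/X_{n-k,n}$, which tends to $0$ under $(3)$. As a warm-up, $(1)\Rightarrow(2)$ follows from (a): write $F(\epsilon x)/F(x)=1-F(x)^{-1}L(x)(L(\epsilon x)/L(x)-1)$ and evaluate at $x=X_{n-k,n}$; there the factor $nL(x)$ is tight while $L(\epsilon x)/L(x)\to1$ by slow variation ($\epsilon$ being fixed), so $(F(\epsilon X_{n-k,n})/F(X_{n-k,n}))^{n-k-1}\to1$ in probability and, being $\le 1$, also in mean.

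The substantial step is $(1)\Rightarrow(3)$. Here I would apply Markov's inequality conditionally on $X_{n-k,n}$ and split on $\{nL(X_{n-k,n})\le K\}$ to get, for every $\epsilon,K>0$,
\[
P\Bigl(\textstyle\sum_{i=1}^{n-k-1}X_{i,n}>\epsilon X_{n-k,n}\Bigr)\le P\bigl(nL(X_{n-k,n})>K\bigr)+\frac1\epsilon\sup_{x\ge Q(1-K/n)}\frac{(n-k-1)\,m(x)}{x}.
\]
By Karamata's theorem, $\int_0^x L(y)\,dy\sim xL(x)$ as $x\to\infty$, so $m(x)=F(x)^{-1}xL(x)\,\tilde r(x)$ with $\tilde r(x):=\int_0^x L/(xL(x))-1\to0$; the crucial point is that the Winsorizing term $-xL(x)$ inside $m(x)$ exactly cancels the Karamata constant. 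Since $L$ is nonincreasing, $x\ge Q(1-K/n)$ forces $L(x)\le K/n$, so the supremum above is at most $2K\sup_{x\ge Q(1-K/n)}\tilde r(x)$, which tends to $0$ as $n\to\infty$ for fixed $K$ because $Q(1-K/n)\to\infty$. Letting $n\to\infty$ and then $K\to\infty$, using the tightness of $nL(X_{n-k,n})$, yields $(3)$.

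Finally, $(2)\Rightarrow(1)$ by contraposition. If $L$ is not slowly varying, there exist a constant $c>1$ (reducing to this case if the witnessing constant is $<1$) and a sequence $x_m\to\infty$ with $L(cx_m)/L(x_m)\to\beta<1$ along a subsequence. Choosing sample sizes $n_m$ with $n_mL(cx_m)\to\gamma$ for a suitably large fixed $\gamma$ gives $n_mL(x_m)\to\gamma/\beta>\gamma$, so by Poisson localization the events $\{X_{n_m-k,n_m}\le cx_m\}$ and $\{X_{n_m-k-1,n_m}\ge x_m\}$ have limiting probabilities $P(\mathrm{Pois}(\gamma)\le k)$ and $P(\mathrm{Pois}(\gamma/\beta)\ge k+2)$, which sum to more than $1$ once $\gamma$ is large; hence these events overlap with probability bounded away from $0$, and on the overlap $X_{n-k-1,n}/X_{n-k,n}\ge x_m/(cx_m)=1/c$, contradicting $(2)$. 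The main obstacle is the implication $(1)\Rightarrow(3)$: the naive bound $\sum_{i=1}^{n-k-1}X_{i,n}\le(n-k-1)X_{n-k-1,n}$ is useless here, since it can exceed $X_{n-k,n}$ by a factor of order a power of $\log n$; the precise Karamata asymptotics, the exact Winsorizing correction inside $m(x)$, and the accurate localization of $X_{n-k,n}$ all genuinely enter.
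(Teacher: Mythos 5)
The paper contains no proof of Theorem \ref{Haeusler-Mason} for you to be compared against: it is quoted from Haeusler and Mason \cite{HM91}, and the authors state immediately after it that they do not actually need this theorem (what they use and prove in the appendix is Theorem \ref{thm:slowly_var} about the age and residual age). So your proposal stands as a self-contained proof of a cited background result rather than an alternative to an argument in the paper.

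Judged on its own terms, your argument is essentially sound. The chain $(3)\Rightarrow(2)$ is indeed trivial; $(1)\Rightarrow(3)$ via conditional Markov's inequality, the identity $E[X\mathbb{1}_{\{X\le x\}}]=\int_0^xL(y)\,dy-xL(x)$, Karamata's theorem ($\int_0^xL\sim xL(x)$, the same fact the paper quotes in its appendix), and tightness of $nL(X_{n-k,n})$ does give $\sum_{i\le n-k-1}X_{i,n}/X_{n-k,n}\to0$ in probability — the cancellation of the Winsorizing term against the Karamata constant is exactly the crux, as you say; and the contrapositive $(2)\Rightarrow(1)$ via Poisson approximation along a subsequence with $L(cx_m)/L(x_m)\to\beta<1$ works. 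Three points you should still tidy up. First, the claim that, given $X_{n-k,n}=x$, the lower order statistics are i.i.d.\ from $F$ restricted to $[0,x]$ is exact only for continuous $F$; with atoms you should pass through the quantile transform $X_i=F^{\leftarrow}(U_i)$ (or check that the conditional-expectation bound survives ties). Second, the assertion that the two limiting probabilities ``sum to more than $1$ once $\gamma$ is large'' needs a line of justification: it amounts to $P(\mathrm{Pois}(\gamma)\le k)>P(\mathrm{Pois}(\gamma/\beta)\le k+1)$, which holds for large $\gamma$ because the left side is asymptotically $e^{-\gamma}\gamma^k/k!$ while the right side is $e^{-\gamma/\beta}(\gamma/\beta)^{k+1}/(k+1)!$, and the exponential gap wins (the case $\beta=0$ is even easier, since then the second event has probability tending to $1$). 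Third, in $(2)\Rightarrow(1)$ you implicitly assume $L(x)>0$ for all $x$; if $F$ has bounded support (so positivity, hence slow variation, fails) your sequence $x_m\to\infty$ with $L(x_m)>0$ does not exist, and that case must be disposed of separately — there both $X_{n-k-1,n}$ and $X_{n-k,n}$ converge to the right endpoint, so the ratio tends to $1$ rather than $0$ (the degenerate law at $0$ being excluded as ill-posed). With these repairs the proof is complete.
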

\noindent In other words, the largest return times dominate the whole process. In fact, we do not need this theorem neither its generalization (however intuitively it should hold), the important result is related to the age and residual age process.

Now consider the STRP with an arbitrary sequence of parameters $\Lambda=(\lambda_0,\lambda_1,...)$ from $[a,b]$. Let $N_{t}=\max\{n:S_{n}\leq t\}$. The random variables
\[
Y_{t}=t-S_{N_t}\qquad Z_{t}=S_{N_t+1}-t
\]
are called {\it the age and the residual age} respectively. For these quantities, we have

\begin{theorem}\label{thm:slowly_var}

If $F$ has a slowly varying upper tail, then as $t\to\infty$,
\[
\frac{Y_{t}}{t}\stackrel{P}{\to}1\qquad \frac{Z_{t}}{t}\stackrel{P}{\to}1\qquad
\]
\end{theorem}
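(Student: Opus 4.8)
The plan is to reduce the statement for an arbitrary parameter sequence $\Lambda = (\lambda_0, \lambda_1, \dots) \in [a,b]^{\mathbb N}$ to the case of i.i.d.\ summands with a fixed slowly varying tail, using the scaling relation $F_\lambda(t) = F(\lambda t)$ and the boundedness of the parameters. First I would record the elementary stochastic comparison: since $a \le \lambda_j \le b$ for all $j$, the random variable $X_{\lambda_j}$ is stochastically sandwiched between $X_b$ and $X_a$, i.e.\ we may couple so that $X_b^{(j)} \le X_{\lambda_j} \le X_a^{(j)}$ where $\{X_b^{(j)}\}_j$ are i.i.d.\ with law $F(b\,\cdot\,)$ and $\{X_a^{(j)}\}_j$ are i.i.d.\ with law $F(a\,\cdot\,)$. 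Both $F(a\,\cdot\,)$ and $F(b\,\cdot\,)$ have the same slowly varying upper tail as $F$ (up to the constant rescaling, which is absorbed by slow variation). Hence the partial sums $S_n$ of the STRP are sandwiched between two ordinary renewal sums $\underline S_n \le S_n \le \overline S_n$, and correspondingly the counting function $N_t$ is sandwiched between $\overline N_t$ and $\underline N_t$.

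Next I would establish the key one-sided estimate for the \emph{age} $Y_t = t - S_{N_t}$. The event $\{Y_t \le \epsilon t\}$ means that some renewal point falls in $(t-\epsilon t, t]$; equivalently $S_{N_t} > t - \epsilon t$. Using the upper comparison sum $\overline S_n$ (i.i.d.\ with slowly varying tail), the probability that the renewal process $\overline S$ has a point in the window $((1-\epsilon)t, t]$ is controlled by a standard large-jump / single-big-jump argument for subexponential (indeed slowly varying tailed) renewal theory: the dominant contribution to reaching level $t$ comes from one exceptionally large increment, so conditionally on $\overline S_n$ straddling $t$, the last point $\overline S_{n-1}$ is $o(t)$ with probability tending to $1$. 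Concretely I would fix the index $k$ in Theorem~\ref{Haeusler-Mason} (or rather its renewal-theoretic consequence): the ratio $X_{n-1,n}/X_{n,n} \xrightarrow{P} 0$, together with $\sum_{i\le n-1} X_{i,n}/X_{n,n}\xrightarrow{P}1$, forces the second-largest partial-sum gap to be negligible, so whichever partial sum first exceeds $t$ does so almost entirely through its final, maximal-order term. This yields $\Prob(\overline S_{N_t^{\overline S}} \le (1-\epsilon)t) \to 1$, hence $\Prob(Y_t > (1-\epsilon)t) \to 1$ for the comparison process; pushing this through the coupling gives $Y_t/t \xrightarrow{P} 1$ for the STRP.

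For the \emph{residual age} $Z_t = S_{N_t+1} - t$ I would argue symmetrically but with a subtlety: I need that the increment straddling $t$, namely $X_{\lambda_{N_t+1}} = S_{N_t+1} - S_{N_t}$, is itself of order $t$ (so that since $S_{N_t} = o(t)$ we get $S_{N_t+1}$ of order $t$ and in fact $S_{N_t+1} - t$ of order $t$). This again follows from the single-big-jump picture: the straddling increment is the large one, and its overshoot beyond $t$ is, by slow variation and the inspection-paradox heuristic (the overshoot of a slowly-varying-tailed renewal is of the same order as the jump, which dwarfs $t$), of order $t$ with probability $\to 1$. More carefully, I would write $Z_t/t = (S_{N_t+1} - S_{N_t})/t - Y_t/t + 1$... no: cleanly, $Z_t = (S_{N_t+1}-t)$ and $S_{N_t+1} = S_{N_t} + X_{\lambda_{N_t+1}}$, and since $Y_t/t\to 1$ we have $S_{N_t}/t \to 0$ is \emph{false}; rather $S_{N_t} = t - Y_t$ so $S_{N_t}/t = 1 - Y_t/t \to 0$. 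Thus $Z_t/t = S_{N_t+1}/t - 1$ and I must show $S_{N_t+1}/t \xrightarrow{P} 2$ — equivalently that the straddling jump $X_{\lambda_{N_t+1}}/t \xrightarrow{P} 2$. This is the precise quantitative form of the single-big-jump principle for slowly varying tails and is where the real content sits; I would derive it from the comparison with the i.i.d.\ process, for which the distribution of the straddling increment (the spread / length-biased increment covering $t$) can be analyzed directly via the renewal equation, and slow variation forces it to concentrate at scale $t$ with the factor $2$ emerging from the size-biasing.

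The main obstacle I expect is precisely this last point: controlling the straddling increment and its overshoot for a renewal process with a \emph{slowly} varying (not merely regularly varying with positive index) tail, where the mean is infinite and classical renewal theorems fail badly. For regularly varying tails with index in $(0,1)$ one has the arcsine / generalized-arcsine limit laws giving $Y_t/t$ and $Z_t/t$ nondegenerate; the boundary case of slow variation degenerates to the point mass at $1$, and making that degeneration rigorous — uniformly enough to survive the stochastic sandwich with the non-i.i.d.\ parameter sequence — requires care. I would handle it by truncation: split each increment at level $\delta t$, show the number of increments exceeding $\delta t$ before time $t$ is $O(1)$ in probability with the largest one being $\sim t$, and show the sum of the small ($\le \delta t$) increments up to $N_t$ is $o(t)$ in probability using that $n(t)$, the typical number of renewals by time $t$, satisfies $n(t)(1-F(\delta t)) = O(1)$ while the truncated mean $\int_0^{\delta t}(1-F(s))\,ds$ is $o(t)$ by slow variation — a Karamata-type estimate. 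Letting $\delta \to 0$ after $t\to\infty$ closes the argument for both $Y_t$ and $Z_t$.
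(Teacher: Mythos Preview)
Your route to $Y_t/t\to 1$ is genuinely different from the paper's. The paper writes $\Prob(Y_t/t<1-\epsilon)$ as $\sum_n\int_\epsilon^1(1-F_{\lambda_{n+1}}(t(1-y)))\,d\bigl((\Pi^*)_{i\le n}F_{\lambda_i}(ty)\bigr)$, bounds the tail factor via $\lambda_j\ge a$, and then uses Abelian--Tauberian theorems to prove $U_\Lambda(t)(1-F(t))\to 1$ and that the normalised renewal measure $dU_\Lambda(ty)/U_\Lambda(t)$ concentrates at $y=0$, so the integral over $[\epsilon,1]$ vanishes. Your coupling-plus-single-big-jump argument avoids Laplace transforms and is more elementary, but the step ``pushing this through the coupling'' is not automatic: from $\underline S_n\le S_n\le\overline S_n$ one cannot directly compare $S_{N_t}$ with $\overline S_{\overline N_t}$, since $N_t$ depends on $S$. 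The clean inequality is $S_n\in[\hat S_n/b,\hat S_n/a]$ with $\hat S_n=\sum X_j$ i.i.d.\ $\sim F$, whence $N_t\le\hat N_{bt}$ and $S_{N_t}\le\hat S_{\hat N_{bt}}/a$, reducing to the i.i.d.\ case. You should also note that deducing $\hat S_{\hat N_t}/t\to 0$ from Theorem~\ref{Haeusler-Mason} requires passing from fixed $n$ to the random index $N_t$; this is the $\alpha=0$ boundary of Dynkin--Lamperti and deserves a citation rather than a sketch.

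Your argument for $Z_t$ contains a real error. You correctly isolate the straddling increment and then assert $X_{\lambda_{N_t+1}}/t\xrightarrow{P}2$, ``the factor $2$ emerging from the size-biasing''. This is false for slowly varying tails. Since $S_{N_t}/t\to 0$, the straddling increment is asymptotically a draw from $F$ conditioned to exceed $t$, and slow variation gives $\Prob(X>Mt\mid X>t)=(1-F(Mt))/(1-F(t))\to 1$ for every $M>0$; hence $X_{\lambda_{N_t+1}}/t\to\infty$ and $Z_t/t=X_{\lambda_{N_t+1}}/t-Y_t/t\to\infty$, not $1$. The renewal identity confirms this: $\Prob(Z_t>ct)=\int_0^t(1-F_{\lambda_{n+1}}((1+c)t-s))\,dU_\Lambda(s)\sim(1-F(t))U_\Lambda(t)\to 1$ for every $c>0$. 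Your truncation outline repeats the mistake when it says ``the largest one being $\sim t$'': the big jump overshoots $t$ by an amount $\gg t$. So the stated limit $Z_t/t\xrightarrow{P}1$ cannot be established by your method (nor by any method --- the correct limit is $+\infty$); fortunately only the $Y_t$ half is used downstream in the proof of Theorem~\ref{main_thm_1}.
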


This theorem was proved in a more general setting in \cite{PGySz2010} but in our case a much simpler proof is available which we present in the appendix.

\vspace{3mm}
\noindent\textbf{Markovian Renewal Processes}

\begin{definition}
A scaled-type renewal process $S_{\lambda_0,n} = \sum_{j=0}^n X_{\lambda_j}:\ n = 0, 1, 2, \dots$ is called a {\rm scaled type Markovian renewal process (STMRP)} if $\lambda_0, \Lambda_1, \Lambda_2, \dots$ is a homogeneous Markov chain with values in $[a, b]$ and for every realization  $\lambda_1, \lambda_2, \dots$ of this Markov chain $S_n = \sum_{j=0}^n X_{\lambda_j}:\ n = 0, 1, 2, \dots$ is a scaled-type renewal process.
\end{definition}

The essential properties of such processes are well described in the literature when the waiting times have finite means (cf. the introduction of \cite{PGySz2010} for further reference). However, it is clear from \ref{tail} that we are now facing the infinite mean case which seemed untouched before the authors established results in \cite{PGySz2010} with further restrictions on $F$. This was done in a more general setting, here we only present what is necessary for our current purposes. In our case when $F$ is slowly varying, the more complicated machinery of the cited paper is not necessary and just as in the case of Theorem \ref{thm:slowly_var}, we give a much simpler proof in the appendix.

Let $g(\lambda_-,\lambda_+)$ be the transition kernel of the Markov-chain. Suppose that this is a recurrent Harris chain with stationary measure $\rho_s$.

The expectation of $X_{\lambda}$ is denoted by $\mu_{\lambda}=\mu/\lambda$ whenever $\mu=\int_0^{\infty} x dF$ is finite. We repeat that the parameter interval is chosen so that $0<a<b<\infty$.

Let $N_{t,\lambda_0}$ denote the number of the renewals occurred before time $t$ (including the one at $t=0$) with initial parameter value $\lambda_0$, i.e.
\begin{equation}\label{felujitasok_szama}
N_{t,\lambda_0}=\inf\{n:S_{\lambda_0,n}\geq t\}
\end{equation}
and let $U_{\lambda_0}(t)$ be its expectation.

A classical question is: what is the "type" of the current renewal at time $t$, i.e. what is the distribution of the parameter $\lambda$. Denote the corresponding measure conditioned on the initial parameter value $\lambda_0$, by $\Phi_{t,\lambda_0}$, i.e.
\[
\Phi_{t,\lambda_0}(A)=\Prob(\Lambda_{N_{t,\lambda_0}-1}\in A\subseteq[a,b])
\]

By investigating the asymptotics we get
\begin{theorem}\label{thm:inf_exp}\label{limstatdistr}
If  $1-F$ is slowly varying, then
\[
\lim_{t\to\infty}\Phi_{t,\lambda_0}(A)=\rho_s(A)
\]
\end{theorem}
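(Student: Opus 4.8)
The plan is to show that the ``type chain'' $\Lambda_{N_{t,\lambda_0}-1}$, i.e.\ the parameter attached to the renewal interval straddling $t$, converges in distribution to the stationary law $\rho_s$ of the underlying Harris chain $g(\lambda_-,\lambda_+)$. The heuristic is exactly Theorem~\ref{thm:slowly_var}: when $1-F$ is slowly varying, the single interval covering $t$ is almost all of $[0,t]$, so $t$ is typically reached after a \emph{large, $t$-dependent} number of steps of the Markov chain; by Harris ergodicity, the chain has long since equilibrated by then, hence its state is distributed like $\rho_s$ regardless of $\lambda_0$. The work is to make ``a large number of steps'' quantitative and to control the correlation between the index $N_{t,\lambda_0}$ and the chain value at that index.

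First I would isolate a deterministic lower bound on $N_{t,\lambda_0}$ that holds with probability tending to one. Since each $X_{\lambda_j}$ has distribution $F(\lambda_j\,\cdot)$ with $\lambda_j\in[a,b]$, the variables $X_{\lambda_j}$ are stochastically dominated (up to the fixed factor $1/a$) by i.i.d.\ copies of $X_1$ with distribution $F$. For i.i.d.\ slowly-varying summands, the partial sums satisfy $S_n/X_{n,n}\to 1$ (Theorem~\ref{Haeusler-Mason}), and more to the point $\Prob(S_n<t)$ decays very slowly in $n$ for fixed large $t$: concretely, for any sequence $m(t)\to\infty$ with $m(t)=o(\text{something})$, one still has $\Prob(N_{t,\lambda_0}\le m(t))\to 0$. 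I would make this precise via the standard one-big-jump / maximal-term estimate: $\Prob(S_{m}\ge t)\ge \Prob(\max_{j\le m}X_{\lambda_j}\ge t)\ge 1-(1-L(bt))^{m}$, which is already $\to 1$ as soon as $m\,L(t)\to\infty$; since $L$ is slowly varying this permits $m(t)$ growing essentially like any fixed inverse-power of $L(t)$, in particular $m(t)\to\infty$. Hence $N_{t,\lambda_0}\ge m(t)$ with probability $\to 1$ for a suitable $m(t)\to\infty$.

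Next I would use Harris ergodicity of $g$. Let $\nu_n(\lambda_0,\cdot)=\Prob(\Lambda_n\in\cdot\mid \Lambda_0=\lambda_0)$; ergodicity gives $\sup_{\lambda_0}\|\nu_n(\lambda_0,\cdot)-\rho_s\|_{\mathrm{TV}}\to 0$. The subtlety is that $N_{t,\lambda_0}$ is a \emph{random} index, and not a stopping time for the chain $(\Lambda_n)$ alone in a way that is independent of the future of the chain --- the event $\{N_{t,\lambda_0}-1=n\}$ depends on $X_{\lambda_0},\dots,X_{\lambda_{n}}$, which depend on $\lambda_0,\dots,\lambda_{n}$. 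However, $\{N_{t,\lambda_0}-1=n\}$ is measurable with respect to $\lambda_0,\dots,\lambda_n$ and the auxiliary randomness generating the $X$'s from the $\lambda$'s; conditionally on the whole chain trajectory, the $X_{\lambda_j}$ are independent. So I would condition on the chain trajectory $(\lambda_j)_{j\ge 0}$, write
\[
\Phi_{t,\lambda_0}(A)=\sum_{n\ge 0}\EXP\big[\one\{\lambda_n\in A\}\,\Prob\big(N_{t,\lambda_0}-1=n\mid (\lambda_j)_j\big)\big],
\]
and then split the sum at $m(t)$. The tail $n<m(t)$ contributes at most $\Prob(N_{t,\lambda_0}\le m(t))\to 0$. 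For $n\ge m(t)$, I use that on this event $\one\{\lambda_n\in A\}$ is being averaged against a weight that is $\sigma(\lambda_0,\dots,\lambda_{n})$-measurable; by the Markov property, conditionally on $\lambda_0,\dots,\lambda_{n-1}$ the value $\lambda_n$ is drawn from $g(\lambda_{n-1},\cdot)$, but the weight $\Prob(N_{t,\lambda_0}-1=n\mid(\lambda_j)_j)$ also depends on $\lambda_n$ (through $X_{\lambda_n}=S_{n}-S_{n-1}$ overshooting $t$). I would decouple this by a further conditioning: on $\{S_{n-1}<t\}\cap\{n-1\ge m(t)\}$, the distribution of $\lambda_{n-1}$ is within $o(1)$ in total variation of $\rho_s$ \emph{uniformly}, because $\{S_{n-1}<t, N_{t,\lambda_0}-1\ge n-1\}$ is $\sigma(\lambda_0,\dots,\lambda_{n-2}, X_{\lambda_0},\dots,X_{\lambda_{n-2}})$-measurable and $\lambda_{n-1}$ given $\lambda_{n-2}$ follows $g(\lambda_{n-2},\cdot)$, which after $\ge m(t)-1$ Harris steps is $\rho_s$-close regardless of $\lambda_0$. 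Summing over $n\ge m(t)$ then gives $\Phi_{t,\lambda_0}(A)=\rho_s(A)+o(1)$, first for $A$ in a convergence-determining class, hence weakly; since $[a,b]$ is compact and $\rho_s$ is a probability measure this upgrades to convergence of the measures.

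The main obstacle, and where I would spend the most care, is precisely this decoupling of the random renewal index from the Markov chain value at that index: one must argue that conditioning on ``the chain has taken at least $m(t)$ steps to exceed level $t$'' does not bias the chain state away from $\rho_s$. The clean way is to observe that the event ``$N_{t,\lambda_0}-1\ge k$'' equals ``$S_{k-1}<t$'', which is determined by $\lambda_0,\dots,\lambda_{k-1}$ and the generating randomness, \emph{and} is an increasing-in-$k$ nested family, so that $\Lambda_{N_{t,\lambda_0}-1}$ is the chain sampled at the first index where a nested exit event fails --- a renewal-theoretic last-exit decomposition. Combining the uniform (in $\lambda_0$) Harris mixing rate with the fact, from the slowly-varying tail, that this exit index is $\ge m(t)\to\infty$ in probability, yields the result. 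An alternative, perhaps cleaner route avoiding index-randomness entirely: for any fixed $m$, $\Phi_{t,\lambda_0}(A)=\EXP[\Phi_{t-S_m,\,\lambda_m}(A)\,\one\{S_m<t\}]+\Prob(N_{t,\lambda_0}-1<m)$; let $t\to\infty$ with $m$ fixed so the second term vanishes and $S_m/t\to0$, giving $\limsup_t|\Phi_{t,\lambda_0}(A)-\EXP\,\Phi_{\infty,\lambda_m}(A)|=0$ if the limit $\Phi_\infty$ exists and is continuous in the parameter; then let $m\to\infty$ and use $\Lambda_m\Rightarrow\rho_s$. Either way the two ingredients are Harris mixing and the slow-tail domination already recorded above.
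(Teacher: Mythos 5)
Two steps of your argument do not hold as written. First, your quantitative bound on the renewal index points the wrong way: the one--big--jump estimate $\Prob(S_m\ge t)\ge 1-(1-L(bt))^m$ is a \emph{lower} bound on the probability that the sum has already exceeded $t$ after $m$ steps, i.e.\ it shows $\Prob(N_{t,\lambda_0}\le m)\to 1$ when $mL(t)\to\infty$, which is an \emph{upper} bound on $N_{t,\lambda_0}$, not the lower bound you claim to deduce. The fact you actually need, $N_{t,\lambda_0}\to\infty$ in probability with some $m(t)\to\infty$, is true but trivial ($S_m<\infty$ a.s.\ for each fixed $m$, using $X_{\lambda_j}\preceq X_a$, so $\Prob(N_{t,\lambda_0}\le m)\to0$ for fixed $m$; then diagonalize); note that it holds for \emph{every} $F$, so no slow variation is used there. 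Second, and more seriously, the decoupling claim is unjustified: $\{S_{n-1}<t\}$ is a trajectory-dependent event, so conditioning on it biases the law of $(\lambda_0,\dots,\lambda_{n-2})$ (it favours realizations with short intervals), and uniform Harris mixing controls only the \emph{unconditional} $n$-step kernel; one single application of $g$ to this biased law need not be close to $\rho_s$. Moreover the weight $\Prob(N_{t,\lambda_0}-1=n\mid(\lambda_j)_j)$ contains the requirement that the interval straddling $t$ actually covers it, a factor of the form $1-F_\lambda(t-s)$ depending on the type of that interval; your sketch acknowledges this dependence but never removes it.

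This second gap is not cosmetic: an argument relying only on ``$N_t\to\infty$ plus Harris mixing'' would apply verbatim when $F$ has finite mean, where the limiting type of the current interval is \emph{not} $\rho_s$ but a length-biased modification (weights proportional to $\mu_\lambda=\mu/\lambda$), so the theorem would be false. Slow variation must therefore enter exactly where your sketch is vague: in showing that the covering/size-bias factor is asymptotically type-independent, $1-F_\lambda(u)=L(\lambda u)\sim L(u)$ uniformly over $\lambda\in[a,b]$, and in controlling the conditioning bias on the earlier part of the trajectory. The paper does this by an entirely analytic route: it solves the Markov renewal equation for $\Phi_{t,\lambda_0}$, passes to Laplace transforms, sandwiches the type-dependent factors $\varphi_1(z/\lambda)$ and $1-\varphi_1(z/\lambda)$ between their values at $\lambda=a$ and $\lambda=b$ (asymptotically equal as $z\to0$ precisely by slow variation), sums the series for $\tilde{\omega}_{\lambda_0}(z,A)$ using the exponential bound $\sup_A|g^n(\lambda_0,A)-\rho_s(A)|\le C_1e^{-C_2n}$, and then returns to $t$ via the Tauberian theorem (Fact \ref{Taub2}). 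Your closing alternative is circular as stated: the identity obtained by iterating the renewal equation is fine, but it yields the conclusion only if one already knows that $\lim_t\Phi_{t,\lambda}(A)$ exists and depends continuously on $\lambda$, which is essentially the content of Theorem \ref{limstatdistr}. To repair the probabilistic route you would need to combine $Y_t/t\to1$ (Theorem \ref{thm:slowly_var}) with the uniform asymptotic equality of the tails $1-F_\lambda((1-\eps)t)$ over $\lambda\in[a,b]$, and a sandwiching of the interval laws between $F_a$ and $F_b$ analogous to the paper's Laplace-transform bounds.
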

Intuitively, this means that the waiting times are so similar in a probabilistic sense, that the rescaling does not matter asymptotically. This means that the process behaves analogously as if the waiting times were iid.

\subsubsection{Sketch of proof of the main result}
For the convenience of the reader, here we present the main ideas used in the proof of Theorem \ref{main_thm_1}.

\begin{enumerate}

\item First note, that the spatial difference of the two random walkers $\xi'_t=(\eta_{t,\lambda_0}^1-\eta_{t,\lambda_0}^2,\epsilon_t^1,\epsilon_t^2,\Lambda_t)$ is again a RWwIS (modulo the origin, where the collision kernel also comes into play) on the state space
\[
\integers^2\times S^2\times \tilde{I}
\]
where $\tilde{I}$ is as in Section 2.1.2. The collision of the two particles corresponds to the return of this walk to the origin and we can use our results developed for the return times since the behavior of the first return is not effected by the dynamics at the origin.
\item At time $t$, the relevant information for our goal is the state of the process at the last collision before $t$ and its history since then.
\item It can be shown that, by dividing by $\sqrt{t}$, the location of the last collision before time $t$ goes to zero in probability. Due to Theorem \ref{thm:slowly_var}, the amount of time elapsed since the last collision dominates the whole process. Consequently, we only have to treat two RWwIS which evolve conditioned on not meeting.
\item Now it is clear that the limit distribution will be a mixture according to the value of the outgoing $\Lambda$ at the last collision before $t$. Due to recurrence, the particles will meet infinitely many times and therefore the asymptotic distribution will be $\rho_s$ by Theorem \ref{limstatdistr}.

\item Finally, we have to derive the limit distribution of two independent RWwIS conditioned on not meeting. The unconditional limit is the product of the independent limits determined by Corollary \ref{CLT} and we will show that the conditioned one is the same since the condition becomes irrelevant asymptotically. (This is the point where our method brakes down in one dimension). This can be shown by defining an appropriate random time which behaves like stopping time (although it is not), so the strong Markov-property can be used (Lemma \ref{Bolthausen_identity}). Since this quasi-stopping time is very small in a certain sense, we can obtain the desired result. However, we will choose a different approach and refer to the correspondig result for ordinary random walks.
\end{enumerate}

\section{Proof of Local Limit Theorem and of related results}\label{Proof_LCLTH}

We will follow the main ideas outlined in \cite{KSz83} but for the sake of self containedness, we conduct the whole proof except for some tedious calculation.

The transition operator for the discrete time RWwIS is
\[
T:L_{\infty}(\mathbb{Z}^d\times S)\to L_{\infty}(\mathbb{Z}^d\times S)
\]
with
\[
(Tf)(x)=\sum_{y\in\mathbb{Z}^d\backslash\{0\}}P_yf(x-y)
\]
for $f\in L_{\infty}(\mathbb{Z}^d\times S)$. Clearly if $T_i$ is the time when the $i$th transition occurs i.e.
\[
T_0=0\qquad T_i=\inf\{t>T_{i-1}:\eta_t\neq\eta_{T_{i-1}}\}
\]
then
\[
h_{T_{i+1},x}(.)=\sum_{y\in\mathbb{Z}^d\backslash\{0\}}P_y^*h_{T_i,x-y}(.)
\]
Note that the dual is $\mathcal{M}(\mathbb{Z}^d\times S)$  i.e. the signed measures of bounded total variation on $\mathbb{Z}^d\times S$.

As usually in case of limit theorems, we use spatial Fourier transforms
\[
\hat{f}:[-\pi,\pi)^d\to L_{\infty}(S)\qquad \hat{f}(s)=\sum_{x\in\mathbb{Z}^d}e^{i(s,x)}f(x)
\]
For the operator $T$, we have
\[
\widehat{(Tf)}(s)=\sum_{x\in\mathbb{Z}^d}e^{i(s,x)}(Tf)(x)=\sum_{x\in\mathbb{Z}^d}\sum_{y\in\mathbb{Z}^d\backslash\{0\}}e^{i(s,x)}P_yf(x-y)
\]
which by the change of variables $y'=y$ and $x'=x-y$ further equals
\begin{align*}
\sum_{y'\in\mathbb{Z}^d\backslash\{0\}}\sum_{x'\in\mathbb{Z}^d}e^{i(s,x'+y')}P_{y'}f(x')=\sum_{y'\in\mathbb{Z}^d\backslash\{0\}}e^{i(s,y')}P_{y'}\sum_{x'\in\mathbb{Z}^d}e^{i(s,x')}f(x')
\end{align*}
that is just $\alpha(s)\hat{f}(s)$ where we introduced the operator valued Fourier-transform
\[
\alpha(s)=\sum_{y'\in\mathbb{Z}^d\backslash\{0\}}e^{i(s,y')}P_{y'}
\]
By induction, it follows that
\begin{equation}\label{eq:Four_induct}
\widehat{(T^nf)}(s)=\alpha^n(s)\hat{f}(s)
\end{equation}

Returning to the continuous time, note that by Definition \ref{speed_param},
\[
\{T_i-T_{i-1}\}_{i=1}^{\infty}
\]
is an i.i.d. sequence with distribution $EXP(\lambda)$, thus the number of transitions occurred up until time $t$ has distribution $POI(\lambda t)$. Using this, we have for $A\subseteq S$
\begin{equation}\label{h_poi}
h_{t,x}(A|\xi_0=(0,u_0))=e^{-\lambda t}\sum_{n=0}^{\infty}\frac{(\lambda t)^n}{n!}((T^*)^n(\delta_{0}\delta_{u_0}))\bigg|_{x,A}
\end{equation}
Using the Fourier-inversion formula, this equals
\[
\frac{1}{(2\pi)^d}\int_{-\pi}^{\pi}...\int_{-\pi}^{\pi}e^{-i(s,x)}\sum_{n=0}^{\infty}e^{-\lambda t}\frac{(\lambda t)^n}{n!}\widehat{((T^*)^n(\delta_{0}\delta_{u_0}))}(s)ds\bigg|_{A}
\]
which after using \eqref{eq:Four_induct} becomes
\begin{align*}
\frac{1}{(2\pi)^d}\int_{-\pi}^{\pi}...\int_{-\pi}^{\pi}e^{-i(s,x)}&\sum_{n=0}^{\infty}e^{-\lambda t}\frac{(\lambda t)^n}{n!}(\alpha^*)^n(s)\delta_{u_0}ds\bigg|_{A}=\\
&=\frac{1}{(2\pi)^d}\int_{-\pi}^{\pi}...\int_{-\pi}^{\pi}e^{-i(s,x)}e^{\lambda t(\alpha^*(s)-1)}\delta_{u_0}ds\bigg|_A
\end{align*}
since $\widehat{\delta_0\delta_{u_0}}=\delta_{u_0}$.

\begin{proof}[Proof of Theorem \ref{thm:Local_Limit_Theorem} for $d=1$]

In addition to \eqref{eq:moments} we introduce the third moment
\[
\Xi=\sum_{x\in\mathbb{Z}^d\backslash\{0\}}x^3P_x
\]
following \cite{N09}. Using these moments
\begin{equation}\label{eq:alpha_expansion}
\alpha(s)=Q_S+isM-\frac{s^2}{2}\Sigma-\frac{is^3}{6}\Xi+o(s^3)
\end{equation}
Using a straightforward generalization of Theorem 2.9 in Chapter VIII in \cite{K66} we have that its largest eigenvalue $\chi(s)$ has a similar expansion
\[
\chi(s)=1+r_1s+\frac{r_2}{2}s^2+\frac{r_3}{6}s^3+o(s^3)
\]
The coefficients are
\[
r_1=0\qquad r_2=-(\rho,\Sigma\mathbb{1})+2(\rho,M(Q_S-I)^{-1}M\mathbb{1})=-\sigma^2
\]
as calculated in \cite{KSz83} and
\[
r_3=i\left(3(\rho,\Sigma(Q_S-I)^{-1}M\mathbb{1})+3(\rho,M(Q_S-I)^{-1}\Sigma\mathbb{1})-(\rho,\Xi\mathbb{1})\right)
\]
was computed in \cite{N09}. The largest eigenvalue of $\kappa(s)=e^{\alpha^*(s)-1}$ is
\[
e^{\overline{\chi(s)}-1}=1+r_1s+\left(r_1^2+\frac{r_2}{2}\right)s^2+\left(r_1^3+\frac{r_1r_2}{2}+\frac{\overline{r}_3}{6}\right)s^3+o(s^3)
\]
by elementary calculation which after plugging the above expressions becomes
\begin{equation}\label{eq:largest_eigenvalue}
1-\frac{\sigma^2s^2}{2}+\frac{\overline{r}_3}{6}s^3+o(s^3)
\end{equation}
We mention that if there would be drift, one should use the above formula with $r_1=i(\rho,M\mathbb{1})$ (cf. \cite{KSz83}).

Let $\varphi(s)$ be the eigenvector corresponding to the eigenvalue $\chi(s)$ with $(\rho,\varphi(s))=1$. Introduce the operator
\[
P_{\varphi}f=(\rho,f)\varphi(s)\qquad f\in L_{\infty}(S)
\]
It is easy to see that for $g\in\mathcal{M}(S)$
\[
P_{\varphi(s)}^*g=(g,\varphi(s))\rho
\]

Since every moment is finite, the perturbation is analytic and  $e^{\alpha(s)-1}$ is  continuous in $s$. Consider the operator valued function
\[
R(s)=e^{\alpha(s)-1}-e^{\chi(s)-1}P_{\varphi(s)}
\]
which is again continuous in $s$. Since by assumption there is a spectral gap for $Q_S$,
\[
||R(0)||=||e^{Q_S-1}-P_{\mathbb{1}}||<1
\]
Continuity implies $||R(s)||<1$ for sufficiently small $s$.

Since the eigenspaces depend continuously on the perturbation at $s=0$,
\[
e^{\alpha(s)-1}=\left(1-\frac{\sigma^2s^2}{2}+\frac{r_3}{6}s^3+o(s^3)\right)(P_{\mathbb{1}}+o(1))+R(s)
\]
and thus
$e^{\lambda t\left(\alpha\left(\frac{s}{\sqrt{\lambda t}}\right)-1\right)}$ can be decomposed
\begin{align}\label{al}
\nonumber\left(1-\frac{\sigma^2s^2}{2\lambda t}+\frac{r_3}{6}\frac{s^3}{(\lambda t)^{3/2}}+o\left(\frac{s^3}{(\lambda t)^{3/2}}\right)\right)^{\lambda t}&(P_{\mathbb{1}}+o(1))+\\
&+R^{\lambda t}\left(\frac{s}{\sqrt{\lambda t}}\right)
\end{align}
where the  last term is exponentially converging to zero.
Now we will show that $2\pi\sqrt{\lambda t}$ times
\begin{align}\label{local_limit_proof}
\nonumber\bigg|\bigg|\frac{1}{2\pi}\int_{-\pi}^{\pi}e^{-isx}&e^{\lambda t(\alpha^*(s)-1)}\delta_{u_0}ds\bigg|_A-\\
&-\frac{\rho(A)}{\sqrt{2\pi \lambda t}\sigma}e^{-\frac{x^2}{2\lambda t\sigma^2}}\left(1-\frac{ir_3}{6}\frac{x(3\sigma^2\lambda t-x^2)}{\sigma^6(\lambda t)^2}\right)\bigg|\bigg|
\end{align}
goes to zero. By $\widehat{xf}(s)=-is\hat{f}'(s)$, after some elementary calculations, one can get
\begin{align*}
e^{-\frac{x^2}{2\lambda t\sigma^2}}&\left(1-\frac{ir_3}{6}\frac{x(3\sigma^2\lambda t-x^2)}{\sigma^6(\lambda t)^2}\right)=\\
&=\sqrt{\lambda t}\frac{\sigma}{\sqrt{2\pi}}\int_{-\infty}^{\infty}e^{-itx}e^{-\frac{\lambda t\sigma^2s^2}{2}}\left(1+\frac{r_3\lambda t}{6}s^3\right)ds
\end{align*}
plugging this back and making the change of variables $s'=\sqrt{\lambda t}s$ we have
\begin{align*}
\bigg|\bigg|\int_{-\pi\sqrt{\lambda t}}^{\pi\sqrt{\lambda t}}&e^{-ix\frac{s'}{\sqrt{\lambda t}}}e^{\lambda t\left(\alpha^*\left(\frac{s'}{\sqrt{\lambda t}}\right)-1\right)}\delta_{u_0}ds'\bigg|_A-\\
&-\rho(A)\int_{-\infty}^{\infty}e^{-ix\frac{s'}{\sqrt{\lambda t}}}e^{\frac{-\sigma^2s'^2}{2}}\left(1+\frac{r_3}{6}\frac{s'^3}{\sqrt{\lambda t}}\right)ds'\bigg|\bigg|
\end{align*}
By the triangle equality, this integral is bounded from above by the sum of the following four terms (we drop the prime)
\[
I_1=\int_{|s|<(\lambda t)^{\epsilon}}\bigg|\bigg|e^{\lambda t\left(\alpha^*\left(\frac{s}{\sqrt{\lambda t}}\right)-1\right)}\delta_{u_0}\bigg|_A-\rho(A)e^{-\frac{\sigma^2s^2}{2}}\left(1+\frac{r_3}{6}\frac{s^3}{\sqrt{\lambda t}}\right)\bigg|\bigg|ds
\]
\[
I_2=\rho(S)C\int_{|s|\geq(\lambda t)^{\epsilon}}e^{-\frac{\sigma^2s^2}{2}}ds
\]
\[
I_3=\int_{(\lambda t)^{\epsilon}<|s|<\gamma\sqrt{\lambda t}}\bigg|\bigg|e^{\lambda t\left(\alpha^*\left(\frac{s}{\sqrt{\lambda t}}\right)-1\right)}\delta_{u_0}\bigg|\bigg|ds
\]
\[
I_4=\int_{\gamma\sqrt{\lambda t}<|s|<\pi\sqrt{\lambda t}}\bigg|\bigg|e^{\lambda t\left(\alpha^*\left(\frac{s}{\sqrt{\lambda t}}\right)-1\right)}\delta_{u_0}\bigg|\bigg|ds
\]
for $0<\epsilon<1/6$, $C=\sqrt{\left(1+\frac{r_3^2}{36}\right)}$ and $\lambda t>1$. Again by elementary calculations,
\begin{align*}
\left(1-\frac{\sigma^2s^2}{2\lambda t}+\frac{r_3}{6}\frac{s^3}{(\lambda t)^{3/2}}+o\left(\frac{s^3}{(\lambda t)^{3/2}}\right)\right)^{\lambda t}=\\
=e^{-\frac{\sigma^2s^2}{2}}\left(1+\frac{r_3}{6}\frac{s^3}{\sqrt{\lambda t}}+o\left(\frac{s^3}{\sqrt{\lambda t}}\right)\right)
\end{align*}

From \eqref{al} now it is clear that $I_1=o(1/\sqrt{\lambda t})$. Using the usual upper bound for the tail of the Gaussian function, it is easy to see that $I_2=o(1/\sqrt{\lambda t})$ as well while the trivial arithmetic condition - similarly as in \cite{KSz83} - implies the exponential decay of $I_4$ with growing $\lambda t$. If $\gamma$ is chosen small enough, then from \eqref{eq:largest_eigenvalue}, it follows that
\[
||e^{\alpha(s)-1}||<e^{-\frac{\sigma^2s^2}{4}}\qquad |s|<\gamma
\]
With this
\begin{align*}
I_3=&\sqrt{\lambda t}\int_{(\lambda t)^{\epsilon-1/2}<|s|<\gamma}\left|\left|e^{\lambda t(\alpha^*(s)-1)}\delta_{u_0}\right|\right|ds<\\
&<\sqrt{n}\int_{(\lambda t)^{\epsilon-1/2}<|s|<\gamma}e^{-\frac{\sigma^2s^2\lambda t}{4}}ds=\int_{(\lambda t)^{\epsilon}<|s|<\gamma\sqrt{\lambda t}}e^{-\frac{\sigma^2s^2}{4}}ds=o\left(\frac{1}{\sqrt{\lambda t}}\right)
\end{align*}
as before, so $I_1+I_2+I_3+I_4=o\left(\frac{1}{\sqrt{\lambda t}}\right)$ and the proof is ready.
\end{proof}
\begin{proof}[Proof for $d\geq 2$]

The multidimensional case is a straightforward generalization. For expansion of the largest eigenvalue
\[
e^{\chi(s)-1}=1-\frac{(s,\sigma s)}{2}+\sum_{i=1}^d\sum_{j=1}^d\sum_{k=1}^d\frac{r_{3,i,j,k}}{6}s_is_js_k+o(|s|^3)
\]
while one has to prove the convergence of
\begin{align*}
\frac{(\lambda t)^{d/2}}{(2\pi)^d}\bigg|\bigg|\int_{-\pi}^{\pi}...\int_{-\pi}^{\pi}e^{-i(x,s)}e^{\lambda t(\alpha^*(s)-1)}\delta_{u_0}ds\bigg|_A-\\
-\frac{\rho(A)}{(\lambda t)^{d/2}}\int_{-\infty}^{\infty}...\int_{-\infty}^{\infty}e^{-\frac{\lambda t(s,\sigma s)}{2}}e^{-i(x,s)}(1+\lambda tf(s))ds\bigg|\bigg|
\end{align*}
to zero where $f(s)$ is the above term containing the $r_{3,i,j,k}$-s. It turns out that the term containing $f(s)$ is just $O\left(1/\sqrt{\lambda t}\right)$ and Theorem \ref{thm:Local_Limit_Theorem} follows.
\end{proof}
\begin{proof}[Proof of Corollary \ref{tail}]

We will obtain the formula \eqref{eq:First_Return1} through the discrete time process. The discrete process was defined in Definition \ref{def:Sinai}. As a straightforward generalization of Theorem 2 of \cite{N09} using the elements of the previous proof, we have
\begin{align*}
h_{n,x}(A)&=\Prob(\xi_n\in\{x\}\times A|\xi_0=(0,u_0))=\\
&=\frac{1}{n^{d/2}}\rho(A)g_{\sigma}\left(\frac{x}{\sqrt{n}}\right)+\mathcal{O}\left(n^{-\frac{d+1}{2}}\right)
\end{align*}
For $d=2$, the remainder term is $\mathcal{O}(n^{-3/2})$ which means it is summable. From the proof of  Theorem 6 of \cite{N09}, we know that
\[
f_n=\Prob(\tau>n|\eta_0=0)=\frac{2\pi\sqrt{|\sigma|}}{\log n}+\mathcal{O}\left(\frac{\log\log n}{\log^2 n}\right)
\]
with the remainder term being uniform in the initial state.
From now $C:=2\pi\sqrt{|\sigma|}$. By the same argument that led to \eqref{h_poi},
\[
1-F_{\lambda}(t)=e^{-\lambda t}\sum_{n=0}^{\infty}\frac{(\lambda t)^n}{n!}f_n
\]

First, introduce a cutoff

\[
\bigg|1-F(t)-\sum_{n=\lceil\frac{\lambda t}{2}\rceil}^{\lfloor\frac{3}{2}\lambda t\rfloor}f_ne^{-\lambda t}\frac{(\lambda t)^n}{n!}\bigg|\leq\mathbf{P}\bigg(|N_t-\lambda t|\geq\frac{\lambda t}{2}\bigg)<\frac{4}{\lambda t}
\]
using Chebyshev's inequality. For the remaining $n$'s it can be easily obtained that
\[
\frac{\log(\lambda t)}{\log n}=1+\mathcal{O}\left(\frac{1}{\log(\lambda t)}\right)
\]
Using this and the Chebyshev inequality again after some calculation we obtain
\begin{align*}
\sum_{n=\lceil\frac{\lambda t}{2}\rceil}^{\lfloor\frac{3}{2}\lambda t\rfloor}f_ne^{-\lambda t}\frac{(\lambda t)^n}{n!}=&\frac{C}{\log(\lambda t)}+\mathcal{O}\left(\frac{1}{\lambda t\log(\lambda t)}\right)+\\
&+\mathcal{O}\left(\frac{1}{\log^2(\lambda t)}\right)+\mathcal{O}\left(\frac{1}{\lambda t\log^2(\lambda t)}\right)+RT
\end{align*}

The remainder term can be estimated similarly. With some elementary calculation again,
\[
\frac{\log\log n}{\log\log(\lambda t)}=1+\mathcal{O}\left(\frac{1}{\log(\lambda t)\log\log(\lambda t)}\right)
\]
and by short computation we obtain
\[
RT=\mathcal{O}\left(\frac{\log\log(\lambda t)}{\log^2(\lambda t)}\right)
\]

Finally one can observe that this last one is the slowest error term so
\[
1-F(t)=\frac{2\pi\sqrt{|\sigma|}}{\log(\lambda t)}+\mathcal{O}\left(\frac{\log\log(\lambda t)}{\log^2(\lambda t)}\right)
\]
as desired.
\end{proof}
\begin{proof}[Proof of Corollary \ref{CLT}]
The result could be derived directly from the local theorem but in one dimension, it is simpler to let $t\to\infty$ in \eqref{al} to obtain
\[
e^{\lambda t\left(\alpha^*\left(\frac{s}{\sqrt{\lambda t}}\right)-1\right)}\delta_{u_0}\to e^{-\frac{\sigma^2 s^2}{2}}P^*_{\mathbb 1}\delta_u
\]
where by definition $P_{\mathbb{1}}^*\delta_u=\rho$.

The multidimensional theorem can be obtained similarly.
\end{proof}
\section{Proof of Theorem 1}

First we want to transfer our results for the returns of one random walker to the origin to the collisions of the two particle case. Note that the spatial difference of the two random walkers form a one particle RWwIS $\xi'_t=(\eta_{t,\lambda_0}^1-\eta_{t,\lambda_0}^2,\epsilon_t^1,\epsilon_t^2,\Lambda_t)$ on the state space
\[
(\mathbb{Z}\times S)^2\times \tilde{I}
\]
except for the origin where the collision kernel spoils the translation invariance. The component in $\tilde{I}$ is the speed-parameter $\Lambda_t$ of the first particle, while the other one's is determined by the fixed total energy ($2E=1$). However the speed-parameter of this difference-process is the sum of these two individual parameters (cf. superposition of Poisson processes)
\[
\Lambda_t+\sqrt{1-\Lambda_t^2}
\]
This is clearly bounded away from zero except for the trivial zero energy case, so we can use the results in Section \ref{STRP}.

Although the behavior is different in the origin, the tail of the times between consecutive visits to the origin of $\xi_t'$ (collisions) will still be slowly varying.

Thus if $\tau(t)$ is now the time of the last collision before $t$, i.e.
\[
\tau(t)=\sup\{s\leq t|\exists\epsilon>0:\eta_{u,\lambda_0}^1\neq \eta_{u,\lambda_0}^2, \eta_{s,\lambda_0}^1=\eta_{s,\lambda_0}^2 \quad s-\epsilon<u<s\}
\]
then $\tau(t)/t\stackrel{P}{\to} 0$ by Theorem \ref{thm:slowly_var}.

Now we have the following decomposition of the process
\begin{equation}\label{decomposition}
\frac{\tilde{J}_{t}}{\sqrt{t}}\bigg|_{J_0}=\sqrt{\frac{t-\tau(t)}{t}}\frac{\tilde{J}_{t-\tau(t)}}{\sqrt{t-\tau(t)}}\bigg|_{J_{\tau(t)}}+\frac{\tilde{J}_{\tau(t)}}{\sqrt{t}}\bigg|_{J_0}
\end{equation}
where we indicated the starting states in subscripts. The last term can be dealt with using Theorem 4.1 in \cite{B68} and the following
\begin{lemma}\label{place_of_last_coll}
For the common place of the last collision $\eta_{\tau(t)}^c\equiv\eta^i_{\tau(t)}$,
\[
\frac{\eta_{\tau(t)}^c}{\sqrt{t}}\stackrel{P}{\to} 0
\]
\end{lemma}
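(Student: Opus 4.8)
The plan is to express $\eta_{\tau(t)}^c$ in terms of the difference-process $\xi'_t$ and its returns to the origin, and to exploit the fact established above that the last collision before $t$ occurs at a time $\tau(t) = o(t)$ in probability. Recall that $\xi'_t = (\eta^1_{t,\lambda_0} - \eta^2_{t,\lambda_0}, \epsilon^1_t, \epsilon^2_t, \Lambda_t)$ is (away from the origin) a RWwIS on $(\integers \times S)^2 \times \tilde I$, and a collision is exactly a return of its spatial component to $0$. The common place $\eta^c_{\tau(t)}$ of the last collision equals $\eta^1_{\tau(t),\lambda_0} = \eta^2_{\tau(t),\lambda_0}$; since $\eta^1 - \eta^2 = 0$ at that moment, $\eta^c_{\tau(t)}$ is one of the two individual walker positions, each of which is itself (the position component of) a single RWwIS run for time $\tau(t)$. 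So it suffices to control $\eta^1_{s,\lambda_0}/\sqrt t$ where $s = \tau(t) = o_P(t)$.

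First I would fix $\eps > 0$ and use $\tau(t)/t \stackrel{P}{\to} 0$ (Theorem \ref{thm:slowly_var}, via the argument preceding the lemma) to choose, given $\delta>0$, a constant $c = c(\delta)$ with $\Prob(\tau(t) > ct) < \delta$ for all large $t$. On the complementary event $\{\tau(t) \le ct\}$ I would like to say that $|\eta^c_{\tau(t)}|$ cannot be much larger than $\sqrt{\tau(t)}$, hence $o(\sqrt t)$. The clean way to do this is a maximal bound: by the Central Limit Theorem (Corollary \ref{CLT}) the single RWwIS satisfies $\eta_s/\sqrt s \Rightarrow \mathcal N(0,\lambda\sigma)$, and a standard tightness/maximal-inequality argument (e.g. the functional CLT for the RWwIS, or a direct second-moment computation using that the increments between successive transitions are bounded by the bounded-range condition and the number of transitions up to time $ct$ is $\mathrm{POI}(\lambda c t)$) gives
\[
\Prob\Big(\sup_{0 \le s \le ct} |\eta^1_{s,\lambda_0}| > \eps \sqrt t\Big) \le \frac{K\,\lambda c t}{\eps^2 t} \cdot \frac{1}{1} \xrightarrow[c\to 0]{} 0,
\]
uniformly in $t$, after first sending $t\to\infty$ and then $c\to 0$; more carefully, $\EXP|\eta^1_{ct,\lambda_0}|^2 = O(\lambda c t)$ by Wald's identity (no drift, Condition (2)), so Doob's maximal inequality yields the display with $K$ absolute. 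Combining, $\Prob(|\eta^c_{\tau(t)}| > \eps\sqrt t) \le \delta + \Prob(\sup_{s\le ct}|\eta^1_{s}| > \eps\sqrt t)$, and letting $t\to\infty$ then $c\to 0$ then $\delta\to 0$ gives the claim.

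The main obstacle is the subtle point that $\tau(t)$ is \emph{not} a stopping time — it is defined by looking into the future ("last collision before $t$") — so one cannot directly plug it into the strong Markov property or into a martingale bound for $\eta^1$. The device that circumvents this is precisely to bound $|\eta^c_{\tau(t)}|$ by the \emph{supremum} $\sup_{0\le s\le ct}|\eta^1_{s,\lambda_0}|$ on the event $\{\tau(t)\le ct\}$, which no longer references $\tau(t)$ at all and is amenable to Doob's inequality; the only remaining input is the independent, probabilistic fact $\tau(t)/t\to 0$. A secondary technical care is the dependence on the internal states and on the running speed parameter $\Lambda_s$, but since $\Lambda_s + \sqrt{1-\Lambda_s^2}$ is bounded away from $0$ and $\infty$ on $[a,b]$, the Poisson rate governing the number of transitions is uniformly bounded, so all the moment bounds above hold uniformly in the initial state and in the realization of $\Lambda$. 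With this the lemma follows.
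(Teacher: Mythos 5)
Your route is genuinely different from the paper's. The paper conditions on the value of $\tau(t)$: it writes $\Prob(|\eta^c_{\tau(t)}|>\epsilon\sqrt t)$ as an integral over the events $\{\tau(t)=ut\}$ with $u$ small, argues that the future-looking part of $\{\tau(t)=ut\}$ (no collision on $(ut,t]$) influences the law of $\eta^1_{ut}$ only through the internal states, discards the spatial condition $\eta^1_{ut}=\eta^2_{ut}$ as being only restrictive, and then applies Chebyshev at the \emph{fixed} time $ut$, bounding the conditional variance by that of a single walker carrying all the energy. You instead eliminate the random time altogether: on $\{\tau(t)\le ct\}$ you dominate $|\eta^c_{\tau(t)}|$ by $\sup_{0\le s\le ct}|\eta^1_s|$, which no longer references $\tau(t)$, and control that supremum by a maximal inequality, sending $t\to\infty$, then $c\to 0$, then $\delta\to 0$. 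This is a clean way to bypass the non-stopping-time character of $\tau(t)$, and it avoids the paper's somewhat delicate assertion that the conditioning acts only through the internal states; the price is that you need a maximal (supremum) estimate rather than a one-time Chebyshev bound.

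The one step that does not go through as written is the justification of that maximal estimate. The process $\eta^1_s$ is not a martingale: condition (2) is $(\rho,M_l\one)=0$, i.e.\ zero drift only on average against the stationary measure, so the conditional expected jump given a particular internal state can be nonzero; hence neither Doob's inequality applied directly to $\eta^1$ nor Wald's identity (your claim $\EXP|\eta^1_{ct}|^2=O(\lambda ct)$ ``by Wald'') is available as stated. The estimate is true, but to get it you need the spectral gap: solve the Poisson equation $(Q_S-I)h=-M\one$ to decompose $\eta^1$ into a martingale plus a uniformly bounded corrector (with a bounded boundary term at each collision inside $[0,ct]$, whose expected number is only logarithmic in $ct$ and hence harmless), apply Doob to the martingale part, and conclude $\EXP\sup_{s\le ct}|\eta^1_s|^2=O(ct)$ uniformly in the initial data, using that the jump rates stay in a compact subset of $(0,\infty)$ and the range is bounded; alternatively invoke tightness from a functional CLT for the RWwIS, which the paper explicitly chooses not to prove. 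With that repair your argument is complete; in fairness, the paper's own variance bound (comparison with an ordinary continuous-time random walk of one-step variance $1$) glosses over the same non-martingale point.
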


\begin{proof}
Pick $\epsilon,\delta>0$ and let $t$ be large enough such that $\Prob(\tau(t)>\epsilon t)<\delta$. Then
\[
\Prob\left(\frac{|\eta_{\tau(t)}^c|}{\sqrt{t}}>\epsilon\right)<\int_0^{\delta}\Prob(|\eta_{ut}^c|>\epsilon\sqrt{t}|\tau(t)=ut)\mu_{\tau(t)}(tdu)+\delta
\]
where $\mu_{\tau(t)}$ is the measure generated by $\tau(t)$.

Since
\[
\{|\eta_{ut}^c|>\epsilon\sqrt{t}\}\cap\{\tau(t)=ut\}\subseteq\{|\eta_{ut}^1|>\epsilon\sqrt{t}\}\cap\{\tau(t)=ut\}
\]
the integrand can be bounded from above by
\begin{align}\label{ineq1}
\Prob(|\eta_{ut}^1|&>\epsilon\sqrt{t}|\tau(t)=ut)\leq\\
\nonumber &\leq\sup_{\lambda,u^i}\Prob(|\eta_{ut}^1|>\epsilon\sqrt{t}|\eta_{ut}^1=\eta_{ut}^2,\Lambda_{ut}=\lambda,\epsilon_{ut}^i=u^i)
\end{align}
The inequality is due to the fact that
\[
\{\tau(t)=s\}=\{\eta_s^1=\eta_s^2\}\cap\{\eta_u^1\neq\eta_u^2 \quad u\in(s,t]\}
\]
where the second event only affects the distribution of $\eta_s^1$ through the internal states. Obviously, the condition $\eta_{ut}^1=\eta_{ut}^2$ is restrictive spatially so we can further estimate \eqref{ineq1} by
\begin{align*}
\sup_{\lambda,u^i}\Prob(|\eta_{ut}^1|>\epsilon\sqrt{t}|\Lambda_{ut}=\lambda,\epsilon_{ut}^i=u^i)\leq\\
\leq\frac{1}{\epsilon^2 t}\sup_{\lambda,u^i}\mathbb{D}^2(|\eta_{ut}^1||\Lambda_{ut}=\lambda,\epsilon_{ut}^i=u^i)
\end{align*}
using Chebysev's inequality. Note that by the diffusive nature of the process, the second moment is monotonously increasing with time, and it is the largest if $\eta_1$ posesses all the energy throughout the whole process up to time $ut$. Thus, the further bound can be obtained:
\begin{align*}
\frac{1}{\epsilon^2 t}\sup_{u^i}\mathbb{D}^2(|\eta_{\delta t}^1|&|\Lambda_{s}=2E\quad s\in[0,\delta t],\epsilon_{\delta t}^i=u^i)=\\
&=\frac{1}{\epsilon^2 t}\sup_{u^1}\mathbb{D}^2(|\eta_{\delta t}^1||\Lambda_{s}=2E\quad s\in[0,\delta t],\epsilon_{\delta t}^1=u^1)
\end{align*}
since the second particle is then standing still at the origin. This variance is nothing else but the variance of a single RWwIS  $\eta_{\delta t}$ with rate $2E$ and its internal state conditioned to be $u^1$ at $\delta t$.

By the bounded range condition, the above variance can be estimated from above by the variance of an ordinary continuous time random walk, in which the one step variance is $1$. It is well known that the variance of such a process at $\delta t$ is $2E\delta t\tilde{\sigma}$ for some constant $\tilde{\sigma}$.

Using this and that $\mu_{\tau(t)}([0,t\delta])<1$ we have that
\[
\Prob\left(\frac{|\eta_{\tau(t)}^c|}{\sqrt{t}}>\epsilon\right)<\delta(\epsilon^{-2}2E\tilde{\sigma}+1)
\]
for some constant $K$ independent of $\delta$. Since $\delta$ is arbitrary, the proof is finished.
\end{proof}

As the next step, note that $\sqrt{(t-\tau(t))/t}\stackrel{P}{\to} 1$ by Theorem \ref{thm:slowly_var} and by Theorem 4.4 in \cite{B68} we only have to prove the weak limit of the remaining term in \eqref{decomposition}.

This limit is nothing else but the joint limit of two continuous time RWwIS starting from the origin and not meeting once they depart. The distribution of the energy between these two are according to the distribution of $\Lambda_{\tau(t)}$ and by Theorem \ref{limstatdistr}, we have
\[
\Prob(\Lambda_{\tau(t)}\in A\subseteq\tilde{I})\to\rho_s(A)
\]

Thus we are ready if we can show that

\begin{lemma}\label{fin_lem}
In $d=2$, the joint law of two independent RWwIS (starting from the origin) conditioned on not meeting once they depart  is the product of the independent one particle limit-laws, i.e. for continuity sets (of the appropriate measures) $A_1$ and $A_2$,
\begin{align}
\Prob\left(\left(\frac{\eta_t^i}{\sqrt{t}},\epsilon_t^i\right)\in A_i\quad i=1,2\bigg|\eta_s^1\neq\eta_s^2\quad s\in [t_{\rm fj}(0),t]\right)\to\\
\to\prod_{i=1}^2\lim_{t\to\infty}\Prob\left(\left(\frac{\eta_t^i}{\sqrt{t}},\epsilon_t^i\right)\in A_i\right)
\end{align}
\end{lemma}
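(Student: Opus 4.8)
The plan is to show that the event ``the two walkers never meet'' becomes asymptotically irrelevant for the diffusively rescaled positions, by splitting the trajectory at an intermediate time at which the walkers are already far apart while their displacement is still $o(\sqrt t)$, so that the conditioning surviving past that time is \emph{mild}. Write $\zeta'_s=\eta^1_s-\eta^2_s$; the conditioning event is $B=B_t=\{\zeta'_s\neq 0\ \forall s\in[t_{\rm fj}(0),t]\}$, i.e. the first-return time $\tau'$ of the RWwIS $\zeta'$ to the origin exceeds $t$, and by Corollary \ref{tail} applied to $\zeta'$ (whose speed parameter $\Lambda+\sqrt{1-\Lambda^2}$ is bounded away from $0$) we have $\Prob(B)=\Theta(1/\log t)$. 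Recall also that as long as the two walkers have not met they evolve as two \emph{independent} RWwIS (the collision kernel is dormant), so the joint law restricted to $B$ is the product law restricted to $B$; and that on $B$ the speed parameter is frozen at $\Lambda_{t_{\rm fj}}$.

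Fix $0<\beta<1$, put $t_0=t^\beta$, $t'=t-t_0$, and condition on $\mathcal F_{t_0}$. On $B$ we have $B_0:=\{\zeta'_s\neq 0,\ s\in[t_{\rm fj}(0),t_0]\}\in\mathcal F_{t_0}$, and by the Markov property together with independence, given $\mathcal F_{t_0}$ on $B_0$ the future increments $(\eta^i_{t_0+\cdot}-\eta^i_{t_0},\epsilon^i_{t_0+\cdot})_{i=1,2}$ are two independent RWwIS started from internal states $\epsilon^i_{t_0}$, and the remaining part of $B$ becomes the event that their difference (a RWwIS started at $0$) avoids the point $-y$, $y:=\zeta'_{t_0}$, up to time $t'\sim t$. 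The decisive point is a change of regime: given $B_0$, a local-limit estimate --- the analogue for the conditioned walk of Theorem \ref{thm:Local_Limit_Theorem} --- shows $|y|$ concentrates on the scale $\sqrt{t_0}=t^{\beta/2}$, whence, by the potential-kernel / first-return asymptotics underlying Corollary \ref{tail}, the conditional probability that the fresh difference walk avoids $-y$ on $[0,t']$ is bounded away from $0$ (of order $\asymp\beta$). In other words the entire $1/\log t$ ``cost'' of $B$ has been absorbed into $B_0$, and the conditioning that survives the intermediate time is mild.

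It remains to show that this surviving mild conditioning does not alter the diffusive limit: two independent RWwIS whose starting points are at distance of order $t^{\beta/2}$ (hence $\gg 1$ but $o(\sqrt t)$), conditioned not to meet up to time $t'\sim t$, have joint rescaled limit equal to the unconditioned one, which by Corollary \ref{CLT} and independence is $\prod_{i}\mathcal N^2(0,\lambda_i\sigma)\times\rho$; and this uniformly over the relevant starting points and internal states. This is precisely where one invokes the corresponding statement for ordinary planar random walks (two independent recurrent walks conditioned not to collide up to time $N$ converge, diffusively rescaled, to a pair of independent Brownian motions), the passage from ordinary random walks to RWwIS being supplied by the comparison built into Theorem \ref{thm:Local_Limit_Theorem}; alternatively one can argue directly via a Bolthausen-type quasi-stopping-time identity (Lemma \ref{Bolthausen_identity}), exploiting that the quasi-stopping time is negligible on the diffusive scale. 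Granting this, one concludes by averaging over $\mathcal F_{t_0}$: on the ``good'' event $\{t^{\beta/2-\epsilon'}\le|y|\le t^{\beta/2+\epsilon'},\ |\eta^i_{t_0}|\le t^{\beta/2+\epsilon'}\}$ the displacement $\eta^i_{t_0}/\sqrt t$ is negligible and $\sqrt t$ may be replaced by $\sqrt{t'}$, so that $\Prob(\,(\eta^i_t/\sqrt t,\epsilon^i_t)\in A_i,\ i=1,2 \mid \mathcal F_{t_0}\,)=\Prob(\text{fresh difference avoids }-y)\,(\prod_i\mu_i(A_i)+o(1))$ with $\mu_i=\mathcal N^2(0,\lambda_i\sigma)\times\rho$; the complementary ``bad'' event has conditional probability $o(1)$ given $B_0$ and hence contributes $o(\Prob(B))$ because $\Prob(B)\asymp\Prob(B_0)$; and $\EXP[\one_{B_0}\Prob(\text{fresh difference avoids }-y)]=\Prob(B)$ by the same decomposition with the $A_i$ omitted. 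Dividing by $\Prob(B)$ yields the assertion.

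The main obstacle is the ``mild conditioning'' step: setting up the correct uniform-in-starting-point reduction to (or direct proof via the quasi-stopping time of) the planar random walk result, together with the supporting local-limit estimate for the difference walk conditioned to avoid the origin, which is what pins down the scale of $y$ and certifies that the surviving conditioning has probability bounded below. One-dimensionally all of this fails, consistently with the remark in the introduction.
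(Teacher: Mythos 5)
Your architecture differs from the paper's: you split at an intermediate time $t_0=t^\beta$ and argue that the entire $\Theta(1/\log t)$ cost of the non-collision event is absorbed by the initial segment, so that only a ``mild'' conditioning (avoiding a point at distance $\asymp t^{\beta/2}$) survives to influence the diffusive scale. The paper instead proves an exact identity (Lemma \ref{Bolthausen_identity}) that replaces the conditioned law of $\tilde J_t^{id}$ by the unconditioned law of the increment over the quasi-stopping time $T(t)$, notes $T(t)/t\stackrel{P}{\to}0$, and then appeals to the Bolthausen-type theorem of \cite{PGySz2010b} (together with an omitted RWwIS invariance principle). Both routes therefore hinge on the same external input --- that conditioning two recurrent planar walks to stay apart does not alter the diffusive limit --- but they package it differently.

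The genuine gap in your version sits exactly at that hinge. What you need is a conditioned CLT that is uniform over starting separations $|y|\asymp t^{\beta/2}$ (diverging with $t$) and over internal states; this is not the cited ordinary-random-walk statement (walks starting together, conditioned not to meet once they depart), and the ``comparison built into Theorem \ref{thm:Local_Limit_Theorem}'' does not exist: Theorem \ref{thm:Local_Limit_Theorem} is an unconditioned local limit theorem and provides no transfer of conditioned laws from ordinary walks to RWwIS. Likewise Lemma \ref{Bolthausen_identity}, as proved, concerns walkers starting at the same site with the conditioning beginning at $t_{\rm fj}(0)$; invoking it from a starting separation $y$ would require re-proving it in that setting, not citing it. So your reduction, sound in outline, lands on a statement that still contains essentially the whole difficulty, now in a stronger (uniform-in-$y$) form, and it also calls for auxiliary estimates the paper never develops (non-collision probability from distance $r$ up to time $t$, i.e.\ hitting asymptotics from afar in the spirit of \cite{KSz84}, and a conditioned local limit estimate for the difference walk). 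Note that the latter can largely be avoided: Corollary \ref{tail} applied at $t$ and at $t_0$ already gives $\Prob(B)/\Prob(B_0)\asymp\beta$ for your events $B,B_0$, and the unconditioned LLT controls $|y|$ after dividing by $\Prob(B_0)$. If you keep your decomposition you must prove the uniform conditioned CLT yourself; the cleaner path is the paper's: establish the identity of Lemma \ref{Bolthausen_identity}, use $T(t)/t\stackrel{P}{\to}0$ (via Theorem \ref{thm:slowly_var}), and then the remaining statement is precisely the one treated in \cite{PGySz2010b}, with Corollary \ref{CLT} supplying the one-particle limits.
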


Before we proceed with the proof, we establish an identity first. Introduce
\[
T(t)=\inf\{s>0:\eta_s^1=\eta_s^2,\eta_u^1\neq\eta_u^2\quad t_{\rm fj}(s)\leq u\leq s+t\}
\]
Note that if there is a constant period of length $t$ at the same site before a ''depart and not return'', this definition of $T(t)$ gives its starting time.
Note that $\Prob(T(t)<\infty)=1$ and that $\{T(t)>a\}\subseteq\{\tau(a)<t\}$ and thus
\[
\Prob\left(\frac{T(t)}{t}>\epsilon\right)\leq\Prob\left(\frac{\tau(\epsilon t)}{\epsilon t}<\frac{1}{\epsilon}\right)\to 0
\]
In other words, $T(t)/t\stackrel{P}{\to} 0$.

Let $\tilde{J}_t^{id}$ denote the non-interacting two-particle system which makes it's first jump according to the collision kernel and note that then
\[
[\tilde{J}_t|\eta_s^1\neq\eta_s^2\quad s\in[t_{\rm fj}(0),t]]\stackrel{D}{=}[\tilde{J}_t^{id}|\eta_s^1\neq\eta_s^2\quad s\in[t_{\rm fj}(0),t]]
\]
We have
\begin{lemma}\label{Bolthausen_identity}
For any $A\in\mathcal{B}((\mathbb{Z}^2\times S)^2)$
\begin{align*}
\Prob(\tilde{J}_t^{id}\in A|&\eta_s^1\neq\eta_s^2\quad s\in[t_{\rm fj}(0),t])=\\
&=(1-e^{-\tilde{\lambda} t})\Prob(\tilde{J}_{T(t)+t}^{id}-\tilde{J}_{T(t)}^{id}\in A|\epsilon^i_{T(t)}=\epsilon_0^i)+\delta_{J_0}e^{-\tilde{\lambda} t}
\end{align*}
where again $\tilde{\lambda}=\lambda+\sqrt{1-\lambda^2}$.
\end{lemma}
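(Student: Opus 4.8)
The identity is of Bolthausen type, and the plan is to prove it by a renewal decomposition along the genuine stopping times at which the two walkers share a site; $T(t)$ itself is \emph{not} a stopping time, since its definition inspects the trajectory a full time $t$ into the future. First split off the trivial part: on $\{t_{\rm fj}(0)>t\}$ no jump of the two-particle process has occurred by time $t$, so $\tilde{J}_t^{id}=J_0$, and the interval $[t_{\rm fj}(0),t]$ is then empty, so this event lies inside the conditioning event $\mathrm{nm}:=\{\eta_s^1\neq\eta_s^2,\ s\in[t_{\rm fj}(0),t]\}$, which coincides with $\{T(t)=0\}$. Since the first jump happens at rate $\tilde{\lambda}$, this event has probability $e^{-\tilde{\lambda}t}$, which accounts for the summand $\delta_{J_0}e^{-\tilde{\lambda}t}$; on its complement a collision-kernel jump has occurred, whence the prefactor $1-e^{-\tilde{\lambda}t}$.

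For the non-trivial part, introduce the successive times $0=\sigma_0<\sigma_1<\sigma_2<\cdots$ at which the two walkers occupy a common site after a separation; each $\sigma_k$ is a stopping time of the natural filtration $(\mathcal{F}_s)$. Let $\kappa=\min\{j:\sigma_{j+1}-\sigma_j>t\}$ be the index of the first excursion longer than $t$, so that $T(t)=\sigma_\kappa$. The crucial structural point is that $\{\kappa=k\}$ factors as $B_k\cap C_k$, where
\[
B_k=\bigcap_{j<k}\{\sigma_{j+1}-\sigma_j\le t\}\in\mathcal{F}_{\sigma_k},\qquad C_k=\{\sigma_{k+1}-\sigma_k>t\},
\]
and $C_k$ depends only on the post-$\sigma_k$ increments — the ``no earlier long excursion'' conditions for $j<k$ are all $\mathcal{F}_{\sigma_k}$-measurable, so the look-ahead never reaches past $\sigma_k$. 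Applying the strong Markov property at $\sigma_k$ (where the particles sit at a common site with internal states $\epsilon^i_{\sigma_k}$, so that the post-$\sigma_k$ process is a fresh copy of the two-particle dynamics), imposing in addition $\epsilon^i_{\sigma_k}=\epsilon_0^i$, and using translation invariance of the non-interacting motion together with the distributional identity $[\tilde{J}_t\mid\mathrm{nm}]\stackrel{D}{=}[\tilde{J}_t^{id}\mid\mathrm{nm}]$ recorded just above the lemma (which absorbs the fact that the restarted copy passes through the collision kernel on its first step), one obtains on $B_k$
\[
\Prob(\tilde{J}_{\sigma_k+t}^{id}-\tilde{J}_{\sigma_k}^{id}\in A,\ C_k\mid\mathcal{F}_{\sigma_k},\ \epsilon^i_{\sigma_k}=\epsilon_0^i)=\Prob(\tilde{J}_t^{id}\in A,\ T(t)=0\mid J_0),
\]
and the same identity with $A$ replaced by the whole space.

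Summing over $k$ on $\{\epsilon^i_{T(t)}=\epsilon_0^i\}$, the masses $\Prob(B_k)$ telescope; since $\Prob(T(t)<\infty)=1$ (as noted above: the difference walk $\eta^1-\eta^2$ is recurrent in $d=2$ by Corollary \ref{tail} and the internal-state chain is an ergodic Harris chain, so a long excursion is eventually made), their total equals $1/\Prob(T(t)=0\mid J_0)$. This is the Bolthausen cancellation: the factor $1/\Prob(T(t)=0)$ neutralises the normalising denominator of the left-hand conditional probability, so the shifted-increment law reproduces $\Prob(\tilde{J}_t^{id}\in A\mid\mathrm{nm})$ on the non-trivial part. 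Combined with the first paragraph, this yields the stated identity.

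I expect the main obstacle to be the bookkeeping of the internal states inside this renewal scheme. Because the law of an excursion length depends on the internal configuration at the excursion's start, $(\sigma_k)$ drives a \emph{Markov} renewal rather than an ordinary one — which is exactly why the statement conditions on $\epsilon^i_{T(t)}=\epsilon_0^i$, to be read through the regular conditional probability given $\epsilon^i_{T(t)}$ — and making the telescoping and the cancellation precise under this conditioning, while keeping track of the continuous-time Poisson bookkeeping and invoking the pre-stated distributional identity to reconcile the collision-kernel first step, is the delicate part. Once the decomposition is routed through the stopping times $\sigma_k$, the non-stopping-time nature of $T(t)$ causes no further difficulty.
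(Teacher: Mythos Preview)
Your core idea---that although $T(t)$ is not a stopping time, the event $\{T(t)=s\}$ factors into an $\mathcal{F}_s$-measurable ``no earlier long excursion'' part and a post-$s$ ``long excursion starts now'' part, so that the strong Markov property plus spatial/time translation invariance applies---is exactly the mechanism the paper uses. In that sense the approaches agree.

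Where you diverge is in the bookkeeping. You discretise along the stopping times $\sigma_k$, set $\kappa=\min\{k:\sigma_{k+1}-\sigma_k>t\}$, and then try to reassemble the conditional law via a sum $\sum_k\Prob(B_k)\cdot[\ldots]$ with a ``Bolthausen cancellation'' $\sum_k\Prob(B_k)=1/\Prob(T(t)=0)$. As you yourself note, this identity is a geometric-series fact that holds only when the excursion lengths are i.i.d.; in the Markov-renewal setting it is not available, and patching it up under the conditioning $\epsilon^i_{T(t)}=\epsilon_0^i$ is precisely the ``delicate part'' you flag but do not resolve. The paper avoids this entirely: it starts from the \emph{right}-hand side, writes
\[
\Prob\bigl(\tilde J_{T(t)+t}^{id}-\tilde J_{T(t)}^{id}\in A\,\big|\,\epsilon^i_{T(t)}=\epsilon_0^i\bigr)
=\int \Prob\bigl(\tilde J_{s+t}^{id}-\tilde J_s^{id}\in A\,\big|\,T(t)=s,\ \epsilon_s^i=\epsilon_0^i\bigr)\,d\mu_t(s),
\]
decomposes $\{T(t)=s\}$ as $L_s^c\cap\{\eta_s^1=\eta_s^2,\ \eta_r^1\neq\eta_r^2\ \text{for }r\in[t_{\rm fj}(s),s+t]\}$, drops $L_s^c$ by the Markov property, and then observes that the integrand is \emph{constant} in $s$ (equal to the left-hand side) by time-translation invariance. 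Since $\mu_t$ is a probability measure ($\Prob(T(t)<\infty)=1$), the integral collapses with no telescoping or cancellation needed; the Markov-renewal dependence of the excursion lengths is already absorbed into $\mu_t$ and never has to be unpacked. Your route can be made to work, but the paper's direct integration is both shorter and dodges the very obstacle you identify as the main one.
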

\begin{proof}
Assume first that $t_{\rm fj}(0)<t$ and set
\[
L_s=\cup_{u\in[0,s-t]}\{\eta_u^1=\eta_u^2, \eta_r^1\neq\eta_r^2, t_{\rm fj}(u)\leq r\leq u+t\}
\]
and let $\mu_t$ denote the distribution of $T(t)$. Then
\begin{align*}
\Prob(\tilde{J}_{T(t)+t}^{id}-\tilde{J}_{T(t)}^{id}\in A&|\epsilon_{T(t)}^i=\epsilon_0^i)=\\
&=\int\Prob(\tilde{J}_{s+t}^{id}-\tilde{J}_s^{id}\in A|T(t)=s,\epsilon_s^i=\epsilon_0^i)d\mu_t(s)
\end{align*}
which equals
\[
\int\Prob(\tilde{J}_{s+t}^{id}-\tilde{J}_s^{id}\in A|L_s^c,\eta_s^1=\eta_s^2, \eta_r^1\neq\eta_r^2\quad r\in[t_{\rm fj}(s),s+t], \epsilon_s^i=\epsilon_0^i)d\mu_t(s)
\]
Since our process is Markov, $L_s^c$ is superfluous while the remaining integrand is time-translational invariant, so by $\Prob(T(t)<\infty)=1$, $\eta_0^1=\eta_0^2=0$ and the way we defined the substraction, we obtain
\[
\Prob(\tilde{J}_t^{id}\in A|\eta_r^1\neq\eta_r^2  \quad r\in[t_{\rm fj}(0),t])
\]
On the other hand if $t_{\rm fj}(0)>t$, then $T(t)=0$ thus we get
\[
\Prob(\tilde{J}_t^{id}\in A|t_{\rm fj}(0)>t)=\delta_{J_0}
\]
\end{proof}

\begin{proof}[Proof of Lemma \ref{fin_lem}]
By assumption for every $P_{\lambda_i}$-continuity set $A\subseteq\mathbb{R}^2\times S$
\[
\Prob\left(\left(\frac{\eta_t^i}{\sqrt{t}},\epsilon_t^i\right)^{id}\in A\right)\to P_{\lambda_i}(A)\quad i=1,2
\]
(Recall that $\lambda_i=\lambda$ for $i=1$ and $\lambda_i=\sqrt{1-\lambda^2}$ for $i=2$) where the $P_{\lambda_i}$-s are determined by Corollary $\ref{CLT}$, thus all we have to show is
\begin{equation}
\Prob\left(\frac{\tilde{J}_{T(t)+t}^{id}-\tilde{J}_{T(t)}^{id}}{\sqrt{t}}\in .\bigg|\epsilon_{T(t)}^i=\epsilon_0^i\right)\Rightarrow(\times_{i=1,2} P_{\lambda_i})(.)
\end{equation}
Although this could be veryfied directly, we choose a different approach. For ordinary random walks, the authors - generalizing a result of Bolthausen - established the desired result in a functional context (cf. Corollary 1 in and remark (4) in \cite{PGySz2010b}). For economicity, we omit the proof of an invariance theorem of the RWwIS (although it is not by any means harder than the invariance principle for the ordinary RW) and the obvious generalization of the cited result.
\end{proof}

\section{Remarks}
\begin{enumerate}
\item The aformentioned direct proof of Lemma \ref{fin_lem} is based on $T(t)/t\stackrel{P}{\to} 0$. It suffices to show that
\[
\Prob\left(\left|\frac{\eta_{T(t)+t}^i-\eta_{T(t)}^i-\eta_t^i}{\sqrt{t}}\right|>\epsilon\bigg|\epsilon_{T(t)}^i=\epsilon_0^i\right)\to 0\quad i=1,2
\]
Similarly as in Lemma \ref{place_of_last_coll}, $\eta_{T(t)}^i/\sqrt{t}\stackrel{P}{\to}$, so we can drop it from the above formula using the triangle inequality. Then, we can argue that $T(t)$ being small implies that the difference of $|\eta_{T(t)+t}-\eta_t|$ is small enough such that it converges to zero  in probability in the scaling limit.
\item For ordinary random walk there is a variant of the local theorem which is a better spatial estimate (cf. P7.10 in \cite{S76}). The corresponding theorem for RWwIS is
\[
\sum_{x\in\integers^d}|x|^2\left|h_{t,x}(A|\xi_0=(0,u_0))-\frac{\rho(A)}{(\lambda t)^{d/2}}g_{\sigma}\left(\frac{x}{\sqrt{\lambda t}}\right)\right|=o(1)
\]
This gives the limit of the mean and the variance of the absolute value.
\item For treating the deterministic model, the realistic alternative is to rely upon the averaging method of \cite{ChD09}. Indeed, between two collisions of the disks there typically occur long collision sequences  of the particular disks with the periodic configuration of fixed scatterers. During these long intervals, their orbits become approximately Brownian and their velocities and the normal of impact incoming into a particular collision of the two disks correspond to an equilibrium distribution and finally their outgoing velocities from the collision can be calculated analogously to the collision operator appearing in the derivation of Boltzmann's equation for a hard disk fluid. We plan to return to the deterministic model in the future.

\end{enumerate}

\appendix
\section{Proof of STRP and STMRP results}\label{Proof_RPdMC}
\subsection{Preparatory facts}
To prove our results we need the so called Abelian-Tauberian theorems (see \cite{F70} XIII.5). Before presenting them, introduce the Laplace transform of $F$:
\[
\varphi(z)=\int_0^{\infty}e^{-zx}dF(x)=z\int_0^{\infty}e^{-zx}F(x)dx\qquad z\geq 0
\]
where the second equality can be obtained by partial integration. Clearly by the scaling relation, $\varphi_{\lambda}(z)=\varphi(z/\lambda)$.
\begin{Fact}[Feller]\label{Taub1}
Let H be a measure on $\reals^+$, $\kappa(z)=\int e^{-zx}dH$ the Laplace transform wrt it and $H(x)\equiv H([0,x])$! Then for $\rho\geq 0$,
\[
\frac{\kappa(t/x)}{\kappa(1/x)}\to t^{-\rho}\qquad x\to\infty
\]
and
\[
\frac{H(tx)}{H(x)}\to t^{\rho}\qquad x\to\infty
\]
imply each other, moreover in this case
\begin{equation}\label{taub_1_cons}
\kappa(1/x)\sim H(x)\Gamma(\rho+1)\qquad x\to\infty
\end{equation}
\end{Fact}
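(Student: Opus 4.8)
The plan is to obtain the two implications from \emph{Karamata's Tauberian theorem}, splitting the argument into the easy (``Abelian'') direction, where regular variation of $H$ yields regular variation of $\kappa$, and the harder (``Tauberian'') converse, and to read off the asymptotic \eqref{taub_1_cons} as a byproduct of the first computation. For the Abelian step, partial integration (exactly as in the displayed identity for $\varphi$ above) gives $\kappa(z)=z\int_0^\infty e^{-zy}H(y)\,dy$, and the substitution $y=ux$ turns this into
\[
\kappa(1/x)=\int_0^\infty e^{-u}H(ux)\,du .
\]
If $H(tx)/H(x)\to t^\rho$ for every $t>0$, then $H$ is regularly varying of index $\rho$, say $H(x)=x^\rho L(x)$ with $L$ slowly varying; the Potter/Karamata bounds ($H(ux)/H(x)\le C\max(u^{\rho-\delta},u^{\rho+\delta})$ for $x$ large and small $\delta>0$) give a dominating function, so dominated convergence applies in the displayed integral and
\[
\frac{\kappa(1/x)}{H(x)}=\int_0^\infty e^{-u}\frac{H(ux)}{H(x)}\,du\;\longrightarrow\;\int_0^\infty e^{-u}u^\rho\,du=\Gamma(\rho+1),
\]
which is \eqref{taub_1_cons}; then $\kappa(t/x)/\kappa(1/x)\to t^{-\rho}$ follows from $\kappa(1/x)\sim\Gamma(\rho+1)H(x)$ and the regular variation of $H$.

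For the Tauberian direction one assumes $\kappa(t/x)/\kappa(1/x)\to t^{-\rho}$, i.e.\ $x\mapsto\kappa(1/x)$ is regularly varying of index $\rho$. Introduce the normalized measures $dH_x(u)=dH(ux)/\kappa(1/x)$ on $\reals^+$; the hypothesis reads $\int_0^\infty e^{-tu}\,dH_x(u)\to t^{-\rho}$ for every $t>0$. One first checks, using monotonicity of $H$ and the value of $\kappa$ at a single point, that the family $\{H_x\}$ does not lose mass at infinity, so every subsequence has a vaguely convergent further subsequence; the limit has Laplace transform $t\mapsto t^{-\rho}$ and hence equals $du^{\rho}/\Gamma(\rho+1)$, independently of the subsequence, so $H_x$ converges vaguely to $du^\rho/\Gamma(\rho+1)$. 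The delicate point is upgrading this to convergence of the \emph{distribution functions} $H(ux)/\kappa(1/x)$ at the continuity point $u=1$: this is the classical Karamata trick — after the change of variable $s=e^{-u}$, approximate $\mathbf{1}_{[0,1]}(u)$ from above and below by polynomials in $e^{-u}$ (Weierstrass on $[0,1]$), for which convergence of $\int(\cdot)\,dH_x$ is already known, and let the approximation error shrink — giving $H(x)/\kappa(1/x)\to 1/\Gamma(\rho+1)$ and hence regular variation of $H$ with index $\rho$.

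The substance is entirely in the second step: the no-escape-of-mass estimate for $\{H_x\}$ and, above all, the approximation argument converting knowledge of the Laplace transform (a smoothed average of $H$) back into pointwise information on $H$ are the places where care is needed; everything in the Abelian step is routine once the Potter bounds are available. Since Fact \ref{Taub1} is a verbatim restatement of Feller, XIII.5, in the paper itself I would simply cite \cite{F70} and use the statement as a black box in the appendix.
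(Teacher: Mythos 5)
Your treatment matches the paper's: Fact \ref{Taub1} is stated there without proof, as a quoted result from Feller \cite{F70} XIII.5, which is exactly the black-box citation you settle on at the end. The sketch you supply (Abelian direction via $\kappa(1/x)=\int_0^\infty e^{-u}H(ux)\,du$ and dominated convergence, Tauberian direction via the selection/continuity argument for the normalized measures and Karamata's polynomial approximation at the continuity point $u=1$) is the standard argument and is sound — note only that for $u\le 1$ the dominating bound comes from monotonicity of $H$ (i.e.\ $H(ux)\le H(x)$) rather than from the Potter bounds alone.
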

A popular reformulation of this result is
\begin{Fact}\label{Taub2}
If $L$ is slowly varying in infinity and $0\leq\rho<\infty$, then
\[
\kappa(1/x)\sim x^{\rho}L(x)\qquad x\to\infty
\]
and
\[
H(x)\sim\frac{1}{\Gamma(\rho+1)}x^{\rho}L(x)\qquad x\to\infty
\]
implies each other.
\end{Fact}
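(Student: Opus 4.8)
The plan is to deduce Fact~\ref{Taub2} directly from Fact~\ref{Taub1} by converting the ``$\sim$'' asymptotics into the ratio limits appearing there and then feeding \eqref{taub_1_cons} back in. Since $x^{\rho}L(x)$ is itself regularly varying of index $\rho$, Fact~\ref{Taub2} is just the customary ``regular variation'' repackaging of Fact~\ref{Taub1}, and no new analytic input is needed.

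First I would treat the implication ``$\kappa(1/x)\sim x^{\rho}L(x)\ \Rightarrow\ H(x)\sim\frac{1}{\Gamma(\rho+1)}x^{\rho}L(x)$''. Fix $t>0$, write $\kappa(t/x)=\kappa\big(1/(x/t)\big)$, and apply the hypothesis at the argument $x/t$, which tends to infinity; this gives, as $x\to\infty$,
\[
\frac{\kappa(t/x)}{\kappa(1/x)}\sim\frac{(x/t)^{\rho}L(x/t)}{x^{\rho}L(x)}=t^{-\rho}\,\frac{L((1/t)x)}{L(x)}\longrightarrow t^{-\rho},
\]
the last step being precisely the definition of slow variation of $L$ applied at the constant $1/t$. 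By Fact~\ref{Taub1} this yields both $H(tx)/H(x)\to t^{\rho}$ and the equivalence \eqref{taub_1_cons}, i.e. $\kappa(1/x)\sim H(x)\Gamma(\rho+1)$. Combining \eqref{taub_1_cons} with the hypothesis then produces $H(x)\sim\frac{1}{\Gamma(\rho+1)}x^{\rho}L(x)$.

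For the converse I would start from $H(x)\sim\frac{1}{\Gamma(\rho+1)}x^{\rho}L(x)$ and run the same substitution on $H$: for fixed $t>0$,
\[
\frac{H(tx)}{H(x)}\sim\frac{(tx)^{\rho}L(tx)}{x^{\rho}L(x)}=t^{\rho}\,\frac{L(tx)}{L(x)}\longrightarrow t^{\rho}.
\]
Fact~\ref{Taub1} then returns $\kappa(t/x)/\kappa(1/x)\to t^{-\rho}$ together with \eqref{taub_1_cons}, and substituting the hypothesis into \eqref{taub_1_cons} gives $\kappa(1/x)\sim H(x)\Gamma(\rho+1)\sim x^{\rho}L(x)$, closing the loop.

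There is essentially no obstacle: the whole argument is a change of variable plus one invocation of Fact~\ref{Taub1}. The only points deserving a line of care are the cancellation $L((1/t)x)/L(x)\to 1$ (immediate from the definition of slowly varying, evaluated at $1/t$) and the legitimacy of inserting the assumed asymptotics at the shifted argument $x/t$, which is fine because $x/t\to\infty$ as $x\to\infty$ for each fixed $t>0$.
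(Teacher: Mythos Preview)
Your derivation is correct: showing that either asymptotic hypothesis forces the ratio limit needed by Fact~\ref{Taub1}, and then reading off \eqref{taub_1_cons}, is exactly how one sees that Fact~\ref{Taub2} is a reformulation of Fact~\ref{Taub1}. Note, however, that the paper does not give its own proof of this statement at all---it is quoted as a standard Tauberian fact from Feller~\cite{F70}, introduced only as ``a popular reformulation'' of Fact~\ref{Taub1}---so there is nothing to compare against; your argument simply supplies the justification the paper leaves to the reference.
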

We also need a technical result (cf. \cite{F70} VIII.9 Theorem 1)
\begin{Fact}
If $K(x)$ is a slowly varying function at infinity and $\beta>-1$ then
\[
\int_1^xy^{\beta}K(y)dy=\left(\frac{1}{1+\beta}+o(1)\right)x^{\beta+1}K(x)
\]
\end{Fact}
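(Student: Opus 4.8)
The plan is to prove this by rescaling the integration variable and then appealing to the structure theory of slowly varying functions. Substituting $y=xu$ gives
\[
\frac{1}{x^{\beta+1}K(x)}\int_1^x y^\beta K(y)\,dy=\int_{1/x}^1 u^\beta\,\frac{K(xu)}{K(x)}\,du,
\]
so the claim is equivalent to showing that the right-hand side tends to $\int_0^1 u^\beta\,du=\tfrac1{\beta+1}$ as $x\to\infty$. Since $\beta>-1$, the weight $u^\beta$ is integrable on $(0,1]$, and the only delicate point is controlling the ratio $K(xu)/K(x)$ when $u$ is close to $0$; for $u$ bounded away from $0$ the uniform convergence theorem for slowly varying functions gives $K(xu)/K(x)\to1$ uniformly.

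Concretely, I would fix $\delta\in(0,1)$ and split the integral at $u=\delta$. On $[\delta,1]$ the uniform convergence theorem yields $\int_\delta^1 u^\beta\, K(xu)/K(x)\,du\to\int_\delta^1 u^\beta\,du$. On $[1/x,\delta]$ I would invoke Potter's bound: for any $\eps>0$ with $\beta-\eps>-1$ there is $X=X(\eps)$ such that $K(xu)/K(x)\le 2u^{-\eps}$ whenever $x\ge X$ and $xu\ge X$; hence the contribution of $[X/x,\delta]$ is at most $2\int_0^\delta u^{\beta-\eps}\,du$, which is small uniformly in large $x$ once $\delta$ is small. The leftover sliver $[1/x,X/x]$ is handled crudely: there $K(xu)\le\sup_{[1,X]}K<\infty$ while $\int_{1/x}^{X/x}u^\beta\,du\le X^{\beta+1}/\big((\beta+1)x^{\beta+1}\big)$, so this piece is $O\big(1/(x^{\beta+1}K(x))\big)\to0$ because $x^{\beta+1}K(x)$ is regularly varying of positive index $\beta+1$ and hence diverges.

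Combining the three estimates, both $\limsup$ and $\liminf$ as $x\to\infty$ of the rescaled integral lie within $\int_0^\delta(u^\beta+2u^{\beta-\eps})\,du$ of $\tfrac1{\beta+1}$; letting $\delta\downarrow0$ completes the proof. An equivalent route is to substitute the Karamata representation $K(x)=c(x)\exp\big(\int_a^x\eta(s)s^{-1}\,ds\big)$ with $c(x)\to c>0$ and $\eta(s)\to0$, and estimate the integral directly. The main obstacle is precisely the behaviour near the lower endpoint: slow variation supplies only pointwise control of $K(xu)/K(x)$, which need not be integrably dominated on its own, so some quantitative input — Potter's inequality or the Karamata representation — is indispensable; everything else is routine.
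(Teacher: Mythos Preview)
Your argument is correct and is essentially the standard proof of Karamata's theorem: the substitution $y=xu$, the split at a small $\delta$, the use of the uniform convergence theorem on $[\delta,1]$, Potter's bound on the bulk of $[1/x,\delta]$, and the crude treatment of the residual sliver $[1/x,X/x]$ are all valid, and the bookkeeping with $\limsup$/$\liminf$ closes the argument cleanly.

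Regarding the comparison you asked for: the paper does not give its own proof of this Fact at all. It is quoted as a known technical result with a reference to Feller, \textit{An Introduction to Probability Theory and Its Applications}, Vol.~II, VIII.9, Theorem~1. So there is no paper proof to compare against beyond the citation; what you have written is in fact the classical argument (as in Feller, or in Bingham--Goldie--Teugels), and it goes beyond what the paper supplies.
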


\subsection{MRP results}
As mentioned in Section 2.3.2, we will investigate the asymptotic behavior of $\Phi_{t,\lambda_0}(A)$ as $t\to\infty$. By conditioning on the first return we can write
\begin{align*}
\Phi_{t,\lambda_0}(A)=&\mathbb{1}_{\{\lambda_0\in A\}}(1-F_{\lambda_0}(t))+\\
&+\int_0^t\int_a^bg(\lambda_0,\lambda_1)\Phi_{t-s,\lambda_1}(A) d\lambda_1dF_{\lambda_0}(s)
\end{align*}
This is, however, not the usual renewal equation, thus we have to generalize the standard results of renewal theory.

The unique solution among the functions which are bounded on bounded intervals is
\begin{equation}\label{solution}
\Phi_{t,\lambda_0}(A)=\int_{A\times[0,t]}(1-F_{\lambda}(t-s))U_{\lambda_0}(ds,d\lambda)
\end{equation}
where
\[
U_{\lambda_0}(t,A)=\EXP\left(N_{t,\lambda_0}\mathbb{1}_{\{\Lambda_{N_{t,\lambda_0}-1}\in A\}}\right)
\]
Note $U_{\lambda_0}(t)=U_{\lambda_0}(t,[a,b])$ is the usual renewal function.

Introduce the measure
\begin{equation}\label{Laplace_U}
\tilde{\omega}_{\lambda_0}(z,A)=\int_{[0,\infty]\times A}e^{-zs}dU_{\lambda_0}(s,\lambda)
\end{equation}
Also let $\tilde{\Xi}(z,A)$ be the Laplace transform of $\Phi_{t,\lambda_0}(A)$ in the variable $t$.

Then by \eqref{solution}, Fubini's theorem, and the product rule of the Laplace transform,
\begin{equation}\label{Laplace_solution}
\tilde{\Xi}_{\lambda_0}(z,A)=\int_A\phi_{\lambda}(z)d\tilde{\omega}_{\lambda_0}(z,\lambda)=\frac{1}{z}\underbrace{\int_A(1-\varphi_1(z/\lambda))\tilde{\omega}(z,d\lambda)}_{W_{\lambda_0}(z,A)}
\end{equation}
where the integration is wrt the measure defined in \eqref{Laplace_U} and
\begin{equation}\label{Laplace_1-F}
\phi_{\lambda}(z)=\int_0^{\infty}e^{-zx}(1-F_{\lambda}(x))dx=\frac{1-\varphi(\frac{z}{\lambda})}{z}
\end{equation}
\begin{proof}[Proof \textit{of Theorem \ref{limstatdistr}}]
We need to show that $W_{\lambda_0}(z,A)\sim\rho_s(A)$ as $z\to 0$ because then the proof is ready by Fact \ref{Taub2} since it implies
\begin{equation}\label{utso}
\int_0^t\Phi_{s,\lambda_0}(A)ds\sim t\rho_s(A)\qquad t\to\infty
\end{equation}
From this, it is easy to see that $\Phi_{t,\lambda_0}(A)\to\rho_s(A)$.

By the monotonicity of $\varphi$,
\begin{equation*}
\left(1-\varphi\left(\frac{z}{b}\right)\right)\tilde{\omega}_{\lambda_0}(z,A)\leq W_{\lambda_0}(z,A)\leq \left(1-\varphi\left(\frac{z}{a}\right)\right)\tilde{\omega}_{\lambda_0}(z,A)
\end{equation*}
We prove that the upper bound is $\sim \rho_s(A)$, then the case of the lower bound is trivially the same.

It is not hard to see that
\begin{align*}
\tilde{\omega}_{\lambda_0}(z,A)&=\mathbb{1}_{\{\lambda_0\in A\}}+g(\lambda_0,A)\varphi_1\left(\frac{z}{\lambda_0}\right)+\\
&+\varphi_1\left(\frac{z}{\lambda_0}\right)\int_a^bg(\lambda_0,\lambda_1)\varphi_1\left(\frac{z}{\lambda_1}\right)g(\lambda_1,A)d\lambda_1+...
\end{align*}
Again by the monotonicity of $\varphi$,
\begin{equation}\label{bound}
\tilde{\omega}_{\lambda_0}(z,A)\leq\mathbb{1}_{\{\lambda_0\in A\}}+\sum_{n=1}^{\infty}\varphi^n\left(\frac{z}{b}\right)g^n(\lambda_0,A)
\end{equation}
We also know that the convergence of the Markov chain is exponential, i.e.
\[
\sup_{A\in\mathcal{B}([a,b])}|g^n(\lambda_0,A)-\rho_s(A)|\leq C_1e^{-C_2n}\qquad C_1,C_2>0
\]
Using this, \eqref{bound} equals
\begin{equation}\label{mindjartvege}
\rho_s(A)\frac{1}{1-\varphi\left(\frac{z}{b}\right)}+RT
\end{equation}
where the remainder term can be estimated
\begin{align*}
|RT|&\leq|\mathbb{1}_{\{\lambda_0\in A\}}-\rho_s(A)|+\sum_{n=1}^{\infty}\varphi_1^n\left(\frac{z}{b}\right)|g^n(\lambda_0,A)-\rho_s(A)|\leq\\
&\leq 2+C_1\frac{1}{1-e^{-C_2}\varphi\left(\frac{z}{b}\right)}=\mathcal{O}(1)\qquad z\to 0
\end{align*}
By multiplying \eqref{mindjartvege} with $1-\varphi(z/a)$ and using that $1-\varphi$ is a slowly varying function as $z\to 0$, the proof is finished.
\end{proof}

\subsection{STMRP with slow tail return times}\label{RenThSlwTail}

\begin{lemma}
\[
\phi_{\lambda}(z)\sim L_{\lambda}\left(\frac{1}{z}\right)\qquad\textrm{so also}\qquad 1-\varphi_{\lambda}(z)\sim L_{\lambda}\left(\frac{1}{z}\right)\qquad z\to 0
\]
where $L_{\lambda}(t)=1-F_{\lambda}(t)$ as $t\to\infty$.
\end{lemma}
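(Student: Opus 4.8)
\emph{Proof plan.} The key is a direct application of the Karamata-type Tauberian result, Fact~\ref{Taub2}, to the integrated tail of $F_\lambda$. First I would record the exact identity from \eqref{Laplace_1-F}, namely $1-\varphi_\lambda(z)=z\,\phi_\lambda(z)$ (recall $\varphi_\lambda(z)=\varphi(z/\lambda)$): this shows that the two asymptotics in the statement say the same thing, so it suffices to prove $z\,\phi_\lambda(z)\sim L_\lambda(1/z)$ as $z\downarrow0$, where $L_\lambda(x)=1-F_\lambda(x)$. Next I would note that $L_\lambda$ is indeed slowly varying at infinity for each fixed $\lambda\in[a,b]$: by the scaling relation $F_\lambda(t)=F(\lambda t)$ we have $L_\lambda(x)=L_1(\lambda x)$, and $L_1=1-F$ is slowly varying by Corollary~\ref{tail} (there $L_1(t)=2\pi\sqrt{|\sigma|}/\log t+o(1/\log t)$), a property plainly inherited by $t\mapsto L_1(\lambda t)$. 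From here on one works with a generic slowly varying $L_\lambda$ with $L_\lambda(x)>0$ for all $x\ge0$.

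Second, I would observe that $\phi_\lambda(z)=\int_0^\infty e^{-zx}(1-F_\lambda(x))\,dx$ is precisely the Laplace--Stieltjes transform of the nondecreasing function $G_\lambda(x):=\int_0^x(1-F_\lambda(y))\,dy=\int_0^x L_\lambda(y)\,dy$. The third preparatory Fact in the appendix (Feller, \cite{F70}, VIII.9, Theorem~1), applied with $\beta=0$ and $K=L_\lambda$, gives $\int_1^x L_\lambda(y)\,dy=(1+o(1))\,x\,L_\lambda(x)$; since $\int_0^1 L_\lambda(y)\,dy=O(1)$ and $x\,L_\lambda(x)\to\infty$ (a slowly varying function decays slower than any power, so $x^\epsilon L_\lambda(x)\to\infty$ for every $\epsilon>0$), this yields $G_\lambda(x)\sim x\,L_\lambda(x)$ as $x\to\infty$. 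Thus $G_\lambda$ is regularly varying of index $\rho=1$, and — since $\Gamma(2)=1$ — one may write $G_\lambda(x)\sim \frac{1}{\Gamma(\rho+1)}\,x^\rho L_\lambda(x)$ with $\rho=1$, which is exactly the normalization required by Fact~\ref{Taub2}.

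Third, I would invoke Fact~\ref{Taub2} with $H=G_\lambda$, $\kappa=\phi_\lambda$, $L=L_\lambda$ and $\rho=1$: the asymptotic $H(x)\sim\frac{1}{\Gamma(\rho+1)}x^\rho L(x)$ established above implies $\kappa(1/x)\sim x^\rho L(x)$, i.e. $\phi_\lambda(z)\sim z^{-1}L_\lambda(1/z)$ as $z\downarrow0$. Multiplying through by $z$ and using $1-\varphi_\lambda(z)=z\,\phi_\lambda(z)$ gives $1-\varphi_\lambda(z)\sim L_\lambda(1/z)$, which is the assertion of the lemma.

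There is no genuine difficulty in this argument; it is a textbook Tauberian computation once the pieces are in place. The only points that need a little care are: (i) verifying $x\,L_\lambda(x)\to\infty$, so that the $O(1)$ contribution of $\int_0^1 L_\lambda$ is negligible against $G_\lambda(x)$; (ii) matching the index $\rho=1$ and the constant $\Gamma(2)=1$ so that Fact~\ref{Taub2} applies verbatim rather than only up to a constant; and (iii) confirming that $L_\lambda$ is slowly varying for each fixed $\lambda$, which follows from slow variation of $1-F$ together with the scaling $F_\lambda(t)=F(\lambda t)$. Everything else is a direct substitution.
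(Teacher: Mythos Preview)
Your proposal is correct and follows essentially the same route as the paper: both introduce the integrated tail $G_\lambda(x)=H_\lambda(x)=\int_0^x(1-F_\lambda(s))\,ds$, use Fact~3 (Feller, VIII.9) with $\beta=0$ to obtain $G_\lambda(x)\sim x\,L_\lambda(x)$ (hence regular variation of index $1$), and then invoke a Tauberian theorem to transfer this to $\phi_\lambda$. Your write-up is in fact slightly more economical than the paper's, which first passes through Fact~\ref{Taub1} and its consequence \eqref{taub_1_cons} before closing with Fact~\ref{Taub2}, whereas you apply Fact~\ref{Taub2} directly with $\rho=1$ and $\Gamma(2)=1$; you also correctly arrive at $\phi_\lambda(z)\sim z^{-1}L_\lambda(1/z)$, matching the paper's displayed conclusion \eqref{phi_asym} (the lemma's statement of $\phi_\lambda(z)\sim L_\lambda(1/z)$ drops the factor $1/z$, which is evidently a typo given $1-\varphi_\lambda(z)=z\,\phi_\lambda(z)$).
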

\begin{proof}
Let
\[
H_{\lambda}([0,x])\equiv H_{\lambda}(x)=\int_0^x1-F_{\lambda}(s)ds=(1+o(1))x(1-F_{\lambda}(x))+o(x)
\]
by Fact 3, so
\begin{equation}\label{H_reg_var}
\frac{H_1(tx)}{H_1(x)}\to t\qquad x\to\infty
\end{equation}
Due to Fact \ref{Taub1},
\[
\frac{\phi_{\lambda}(tz)}{\phi_{\lambda}(z)}\to\frac{1}{t}\qquad z\to 0
\]
and by \eqref{taub_1_cons},
\[
1-\varphi_{\lambda}(z)\sim \frac{H_{\lambda}(1/z)}{1/z}=z\int_0^{1/z}(1-F_{\lambda}(s))ds\qquad z\to\infty
\]
By \eqref{H_reg_var} and the definition of $H_{\lambda}$, we can write
\[
H_{\lambda}(t)\sim tL_{\lambda}(t)\qquad t\to\infty
\]
where
\[
L_{\lambda}(t)=\frac{1}{t}\int_0^t(1-F_{\lambda}(s))ds\sim 1-F_{\lambda}(t)\qquad t\to\infty
\]
which has $L_{\lambda}(t)=L(\lambda t)=1-F(\lambda t)$. By Fact \ref{Taub2},
\begin{equation}\label{phi_asym}
\phi_{\lambda}(z)\sim\frac{1}{z}L_{\lambda}\left(\frac{1}{z}\right) \qquad 1-\varphi_{\lambda}(z)\sim L_{\lambda}\left(\frac{1}{z}\right) \qquad z\to 0
\end{equation}
\end{proof}
Introduce now
\[
U_{\Lambda}(t)=\sum_{n=0}^{\infty}(\Pi^*)_{i=1}^nF_{\lambda_i}(t)
\]
where $\Pi^*$ denotes the convolution product.
\begin{lemma}
\[
U_{\Lambda}(t)(1-F_1(t))\to 1\qquad t\to\infty
\]
\end{lemma}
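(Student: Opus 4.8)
The plan is to pass to Laplace transforms and apply the Abelian--Tauberian machinery of Facts 1--3. Write $L_1(t)=1-F_1(t)$, which is slowly varying by hypothesis, and recall from the previous lemma that $1-\varphi_\lambda(z)\sim L_\lambda(1/z)$ as $z\to 0$, with $L_\lambda(t)=L(\lambda t)$. The renewal-type quantity $U_\Lambda(t)=\sum_{n\ge 0}(\Pi^*)_{i=1}^n F_{\lambda_i}(t)$ has Laplace--Stieltjes transform (with respect to $t$) equal to
\[
\tilde U_\Lambda(z)=\sum_{n=0}^\infty\prod_{i=1}^n\varphi_{\lambda_i}(z)=\sum_{n=0}^\infty\prod_{i=1}^n\varphi(z/\lambda_i),
\]
the sum converging for $z>0$ since each $\varphi(z/\lambda_i)<1$. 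First I would bound this product sum above and below using $0<a\le\lambda_i\le b$ and the monotonicity of $\varphi$: since $\varphi(z/b)\le\varphi(z/\lambda_i)\le\varphi(z/a)$ for all $i$, we get
\[
\frac{1}{1-\varphi(z/b)}\le \tilde U_\Lambda(z)\le\frac{1}{1-\varphi(z/a)}.
\]
Both outer bounds are, by \eqref{phi_asym}, asymptotically $1/L_\lambda(1/z)$ with $\lambda=b$ resp. $\lambda=a$; but because $L$ is slowly varying, $L(b/z)/L(a/z)\to 1$ as $z\to 0$, so both bounds are $\sim 1/L_1(1/z)$. Hence $\tilde U_\Lambda(z)\sim 1/L_1(1/z)$ as $z\to 0$.

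Next I would invoke the Tauberian theorem in the form of Fact~\ref{Taub2}. Since $1/L_1(1/z)$ is slowly varying in $1/z$ as $z\to 0$ (equivalently, $z^0$ times a slowly varying function), Fact~\ref{Taub2} with $\rho=0$ gives
\[
U_\Lambda(t)\sim\frac{1}{\Gamma(1)}\cdot\frac{1}{L_1(t)}=\frac{1}{1-F_1(t)}\qquad t\to\infty,
\]
which is exactly the claim $U_\Lambda(t)(1-F_1(t))\to 1$. Here I am using that $U_\Lambda$ is nondecreeasing so that Fact~\ref{Taub2} applies to it as a measure on $\reals^+$, and that $L_1(t)=1-F_1(t)$ by definition.

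The main obstacle is the first step: justifying that the Laplace transform of $U_\Lambda$ is genuinely $\sim 1/L_1(1/z)$ and not merely sandwiched between two quantities whose ratio tends to $1$ — i.e.\ making sure the squeeze is uniform enough to conclude an actual asymptotic equivalence and that the Tauberian theorem may then be applied to the (possibly $\Lambda$-dependent) measure $U_\Lambda$. This is harmless because slow variation of $L$ forces $L(cz)/L(z)\to 1$ uniformly for $c$ in the compact set $[a/b,\,b/a]$ (uniform convergence theorem for slowly varying functions), so the upper and lower Laplace bounds really do pinch $\tilde U_\Lambda(z)$ to a single slowly varying asymptotic. One should also check that $U_\Lambda(t)$ is finite for each $t$ — which follows from $F_{\lambda_i}$ having positive support, so the $n$-fold convolutions decay fast enough — so that the Laplace transform is well defined and the Tauberian conclusion is not vacuous. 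Everything else is a routine invocation of the Abelian--Tauberian correspondence already set up in the appendix.
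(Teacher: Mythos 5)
Your argument is correct and is essentially the paper's: both proofs sandwich the inhomogeneous renewal function between the constant-parameter cases $\lambda=a$ and $\lambda=b$, feed in $1-\varphi_\lambda(z)\sim L_\lambda(1/z)$ from the preceding lemma, and conclude via the Tauberian Fact 2 together with slow variation of $1-F$; the only difference is order of steps — the paper applies the Tauberian theorem to the homogeneous renewal functions $U^0_a,U^0_b$ and sandwiches $U_\Lambda$ in the time domain, while you sandwich the Laplace transforms and apply the Tauberian theorem to the monotone $U_\Lambda$ directly, which is equally valid. One cosmetic slip: since $\varphi$ is decreasing, the correct bounds are $\varphi(z/a)\le\varphi(z/\lambda_i)\le\varphi(z/b)$, hence $\bigl(1-\varphi(z/a)\bigr)^{-1}\le\tilde U_\Lambda(z)\le\bigl(1-\varphi(z/b)\bigr)^{-1}$, but since both outer bounds are $\sim 1/L_1(1/z)$ this does not affect the conclusion.
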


\begin{proof}
Consider first the case when all $\lambda_i=\tilde{\lambda}$ and introduce the function $U_{\tilde{\lambda}}^0(t)=\sum_{n=0}^{\infty}F_{\tilde{\lambda}}^{*n}(t)$ and its Laplace transform $\omega_{\tilde{\lambda}}(z)$! Then
\[
\omega_{\tilde{\lambda}}(z)=\sum_{n=0}^{\infty}\varphi_{\tilde{\lambda}}^n(z)=\frac{1}{1-\varphi_{\tilde{\lambda}}(z)}\sim\frac{1}{z\int_0^{1/z}(1-F_{\tilde{\lambda}}(s))ds}\qquad z\to 0
\]
Using this with \eqref{phi_asym}, we have
\[
\omega_{\tilde{\lambda}}(z)\sim\frac{1}{L_{\tilde{\lambda}}(1/z)}\qquad z\to 0
\]
which is a slowly varying function at zero (which means $\omega_{\tilde{\lambda}}(1/z)$ is slowly varying at infinity. By Fact \ref{Taub2},
\[
U_{\tilde{\lambda}}^0(t)\sim \frac{1}{L_{\tilde{\lambda}}(t)}\sim \frac{1}{1-F_{\tilde{\lambda}}(t)}\qquad t\to\infty
\]
so we finally have by $1-F$ being slowly varying that
\begin{equation}\label{U_F_dual}
(1-F(t))U_{\tilde{\lambda}}^0(t)\to 1\qquad t\to\infty
\end{equation}

Now we will prove that \eqref{U_F_dual} also holds in general. Clearly
\[
U_{a}^0(t)\leq U_{\Lambda}(t)\leq U_b^0(t)
\]
so by \eqref{U_F_dual}, we have
\[
U_{\Lambda}(t)(1-F(t))\to 1\qquad t\to\infty
\]
\end{proof}
\begin{proof}[Proof \textit{of Theorem \ref{thm:slowly_var}}.]
Now pick arbitrarily small $\epsilon$ and calculate
\[
\mathbf{P}\left(\frac{Y_{t}}{t}<1-\epsilon\right)
\]
which is
\[
\mathbf{P}\left(\cup_{n=0}^{\infty}\cup_{y\in[\epsilon,1]}\{S_{n}=ty\}\cap\{X_{\lambda_{n+1}}>t(1-y)\}\right)=
\]
\[
=\sum_{n=0}^{\infty}\int_{\epsilon}^{1}\left(1-F_{\lambda_{n+1}}\left(t(1-y)\right)\right)d\left((\Pi^*)_{i=1}^nF_{\lambda_i}(ty)\right)=
\]
\[
=\sum_{n=0}^{\infty}\int_{\epsilon}^{1}\left(1-F\left(\lambda_{n+1}t(1-y)\right)\right)d\left((\Pi^*)_{i=1}^nF_{\lambda_i}(ty)\right)
\]
which has an upper bound by $\lambda_i\geq a>0$:
\[
\int_{\epsilon}^{1}\left(1-F_1(at(1-y))\right)dU_{\Lambda}(ty)\sim\int_{\epsilon}^{1}\frac{1-F_1(at(1-y))}{1-F_1(t)}\frac{dU_{\Lambda}(ty)}{U_{\Lambda}(t)}
\]
as $t\to\infty$. Here the first term goes to zero everywhere except $y=1$, since $1-F(t)$ is slowly varying, while we will show that the measure wrt we are integrating, converges uniformly to the point mass concentrated on zero. To see this, Laplace transform the measure $dU_{\Lambda}(ty)/U_{\Lambda}(t)$ to get
\[
\frac{1}{U_{\Lambda}(t)}\int_0^{\infty}e^{-zy}dU_{\Lambda}(ty)=\frac{\omega_{\Lambda}(z/t)}{U_{\Lambda}(t)}\to 1
\]
due to Theorem \ref{Taub1} which is the Laplace transform of the point mass on zero. Therefore, on the domain of interest this measure uniformly goes to zero.

Since $\epsilon$ is arbitrary, the desired result follows. The result for the residual lifetime can be obtained similarly.
\end{proof}

\end{document}